\pdfoutput=1
\documentclass[11pt,a4paper]{article}
\usepackage[utf8]{inputenc}

\usepackage{lmodern}
\usepackage[T1]{fontenc}

\usepackage{hyperref}
\hypersetup{breaklinks, urlcolor=blue, colorlinks=true, citecolor=green!50!black, linkcolor=blue}
\usepackage[letterpaper, left=1in, right=1in, top=0.9in, bottom=0.9in]{geometry}
\usepackage[utf8]{inputenc}
\usepackage[american]{babel}
\usepackage{etoolbox}

\usepackage[normalem]{ulem}

\usepackage{graphicx}
\graphicspath{{images/}{../images/}}

\usepackage{amsthm, amsmath, amssymb, cases}
\usepackage[capitalize]{cleveref}
\usepackage{thmtools}
\usepackage{theoremstyles}
\usepackage[shortlabels]{enumitem}
\usepackage{mdframed}
\usepackage{bbm}
\usepackage{bm}

\usepackage[ruled]{algorithm}
\usepackage[noend]{algpseudocode}

\usepackage{microtype}

\usepackage{mdframed}
\usepackage{xcolor}

\usepackage{tikz}
\usetikzlibrary{fit, arrows, shapes, matrix, chains, positioning, backgrounds, arrows.meta, tikzmark, decorations.pathreplacing}

\usepackage{modletters}

\usepackage[sort,nocompress]{cite}
\usepackage{authblk}

\newcommand\eps\varepsilon
\DeclareMathOperator{\rk}{rk}

\bibliographystyle{alpha}

\title{Breaking $O(nr)$ for Matroid Intersection}

\author{Joakim Blikstad}
\affil{KTH Royal Institute of Technology, Sweden}
\affil{\texttt{blikstad@kth.se}}
\date{}

\date{\today}

\begin{document}

\maketitle

\begin{abstract}
We present algorithms that break the $\tilde O(nr)$-independence-query bound for the Matroid Intersection problem \emph{for the full range of $r$}; where $n$ is the size of the ground set and $r\leq n$ is the size of the largest common independent set. The $\tilde O(nr)$ bound was due to the efficient implementations [CLSSW FOCS'19; Nguy\~{\^e}n 2019] of the classic algorithm of Cunningham [SICOMP'86]. It was recently broken for large $r$ ($r=\omega(\sqrt{n})$), first by the $\tO(n^{1.5}/\eps^{1.5})$-query $(1-\eps)$-approximation algorithm of CLSSW [FOCS'19], and subsequently by the $\tO(n^{6/5}r^{3/5})$-query exact algorithm of BvdBMN [STOC'21]. No algorithm---even an approximation one---was known to break the $\tilde O(nr)$ bound for the full range of $r$.
We present an $\tO(n\sqrt{r}/\eps)$-query $(1-\eps)$-approximation algorithm and an $\tO(nr^{3/4})$-query exact algorithm. Our algorithms improve the $\tilde O(nr)$ bound and also the bounds by CLSSW and BvdBMN for the full range of $r$.

\end{abstract}

\section{Introduction}
\textsf{\textbf{Matroid Intersection}} is a fundamental problem in combinatorial optimization that has been studied for more than half a century.
The classic version of this problem is as follows:
\emph{Given two matroids $\cM_1 = (V, \cI_1)$ and $\cM_2 = (V, \cI_2)$ over a common ground set $V$ of $n$ elements, find the largest common independent set $S^*\in \cI_1 \cap \cI_2$ by making  independence oracle queries\footnote{There are also other oracle models considered in the literature (e.g.\ rank-oracles), but in this paper we focus on the independence query model.
Whenever we say \emph{query} in this paper, we thus mean \emph{independence query}.} of the form ``Is $S \in \cI_1$?'' or ``Is $S \in \cI_2$?'' for $S \subseteq V$.}
The size of the largest common independent set is 
usually denoted by $r$.

Matroid intersection can be used to model many other combinatorial optimization problems,
such as bipartite matching, arborescences, spanning tree packing, etc.
As such, designing algorithms for matroid intersection is an interesting problem to study.

In this paper, we consider the task of finding a $(1-\eps)$-approximate solution to the matroid intersection problem, that is finding some common independent set $S$ of size at least $(1-\eps)r$.
We show an improvement of approximation algorithms for matroid intersection, and
as a consequence also obtain an improvement for the \emph{exact} matroid intersection problem.

\paragraph{Previous work.}
Polynomial algorithms for matroid intersection started with the work of Edmond's
$O(n^2r)$-query algorithms
\cite{edmonds1968matroid,edmonds1970submodular,edmonds1979matroid}
in the 1960s.
Since then, there has been a long line of research
e.g.\ \cite{aignerD, Lawler75, cunningham1986improved, lee2015faster, ChekuriQ16, chakrabarty2019faster,  quadratic2021}.
Cunningham \cite{cunningham1986improved} designed a
 $O(nr^{1.5})$-query blocking-flow algorithm
in 1986, similar to that of Hopcroft-Karp's bipartite-matching or Dinic's maximum-flow algorithms.
Chekuri and Quanrud \cite{ChekuriQ16} pointed out that
Cunningham's classic algorithm \cite{cunningham1986improved} from 1986 
is already a $O(nr/\eps)$-query $(1-\epsilon)$-approximation algorithm.
Recently, 
Chakrabarty-Lee-Sidford-Singla-Wong
\cite{chakrabarty2019faster} and Nguy\~{\^e}n \cite{nguyen2019note}
independently showed how to implement Cunningham's classic algorithm using only $\tO(nr)$
independence queries. This is akin to spending $\tilde O(n)$ queries to find each of the so-called \emph{augmenting paths}. 
A fundamental question is whether several augmenting paths can be found simultaneously to break the $\tilde O(nr)$ bound.

This question has been answered for large $r$ ($r=\omega(\sqrt{n})$), first by the $\tO(n^{1.5}/\eps^{1.5})$-query $(1-\epsilon)$-approximation algorithm of
Chakrabarty-Lee-Sidford-Singla-Wong\footnote{In the same paper they also show a $\tO(n^{2}r^{-1}\eps^{-2} + r^{1.5}\eps^{-4.5})$-query algorithm.} \cite{chakrabarty2019faster}, and very recently by the randomized $\tO(n^{6/5}r^{3/5})$-query exact algorithm of Blikstad-v.d.Brand-Mukhopadhyay-Nanongkai \cite{quadratic2021}. 
Whether we can break the $O(nr)$-query bound {\em for the full range of $r$} remained open \emph{even for approximation algorithms}.

\paragraph{Our results.}
We break the $O(nr)$-query bound for both \emph{approximation}
and \emph{exact} algorithms.
We first state our results for approximate matroid intersection.\footnote{
The $\tO(n^{2}r^{-1}\eps^{-2} + r^{1.5}\eps^{-4.5})$-query algorithm of \cite{chakrabarty2019faster} is the only previous algorithm which is more efficient than our algorithm in some range of $r$ and $\eps$. Actually, since the $\tO(n^{2}r^{-1}\eps^{-2} + r^{1.5}\eps^{-4.5})$-query algorithm use the $\tO(n^{1.5}/\eps^{1.5})$ algorithm as a subroutine, we do get
a slightly improved version by using our $\tO(n\sqrt{r}/\eps)$ algorithm as the subroutine instead:
$\tO(n^{2}r^{-1}\eps^{-2} + r^{1.5}\eps^{-4})$.}

\begin{restatable}[Approximation algorithm]{theorem}{ThmApprox} \label{thm:approx}
There is a deterministic algorithm
which given two matroids $\cM_1 = (V,\cI_1)$ and $\cM_2 = (V, \cI_2)$ on the same ground set $V$, finds a common independent set $S\in \cI_1\cap \cI_2$ with $|S| \ge (1-\eps) r$,
using $O\left(\frac{n\sqrt{r\log r}}{\eps}\right)$ independence queries.
\end{restatable}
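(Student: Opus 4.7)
The plan is to combine Cunningham's blocking-flow framework with a new per-phase batch-augmentation subroutine whose cost is $O(n\sqrt{r\log r})$ independence queries; summing over the phases delivers the stated bound.

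\emph{Framework.} I would maintain a common independent set $S$, initially empty, and work in the exchange graph $G_S$ on vertex set $V$: sources are the elements of $V\setminus S$ freely addable to $S$ in $\cM_1$, sinks are those freely addable in $\cM_2$, and directed arcs encode valid single-element swaps. A source-to-sink path in $G_S$ yields, by augmentation, a common independent set of size $|S|+1$. In one \emph{phase}, let $\ell$ be the current source-to-sink distance in $G_S$; find a maximal collection of vertex-disjoint augmenting paths of length $\ell$ and augment $S$ along all of them simultaneously.

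\emph{Number of phases.} By Cunningham's matroid-intersection version of the Hopcroft--Karp argument, $\ell$ strictly increases across phases. Combined with the min--max theorem for matroid intersection, if every augmenting path in $G_S$ has length greater than $k$, then $r-|S|\le r/k$. Hence $k=O(1/\eps)$ phases suffice to guarantee $|S|\ge(1-\eps)r$.

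\emph{Per-phase cost.} The main technical content is to implement one phase in $O(n\sqrt{r\log r})$ queries. Building on the binary-search edge-discovery routine of \cite{chakrabarty2019faster,nguyen2019note}---which finds a single shortest augmenting path in $O(n\log n)$ queries---I would process shortest augmenting paths in \emph{batches} of size $\tilde\Omega(\sqrt r)$. After a single BFS-layering of $G_S$ that is shared across the batch, the $\tilde\Omega(\sqrt r)$ vertex-disjoint paths are extracted in parallel so that the expensive edge queries amortize across them. Each phase contributes at most $r$ augmentations and so uses $\tilde O(\sqrt r)$ batches, for a per-phase total of $O(n\sqrt{r\log r})$ queries. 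Multiplying by the $O(1/\eps)$ phases yields the claimed $O(n\sqrt{r\log r}/\eps)$ bound.

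\emph{Main obstacle.} The hardest step is the batch subroutine: showing that $\tilde\Omega(\sqrt r)$ vertex-disjoint shortest augmenting paths can be produced at essentially the cost of finding a single path. The structural fact to exploit is that within one phase the BFS-layer labels in $G_S$ are \emph{frozen}---augmenting along one shortest path does not alter the layer of any vertex outside that path---so edge-discovery queries can be reused across paths in the batch. The delicate point is the remaining adaptivity: detecting when a previously-discovered exchange edge has been invalidated by an earlier augmentation within the same batch, and rerouting without re-querying the entire layer structure. This is the core technical obstacle that the proof must resolve.
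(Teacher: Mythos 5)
Your proposal has a genuine gap: the structural fact you lean on---``within one phase the BFS-layer labels in $G_S$ are \emph{frozen}: augmenting along one shortest path does not alter the layer of any vertex outside that path''---is false for matroid intersection. This is precisely the obstacle that the whole line of work you cite is built to circumvent. In bipartite matching or unit-capacity flow one may find a maximal vertex-disjoint collection of shortest augmenting paths and push along all of them simultaneously, because the residual graph is a local object. In matroid intersection the exchange graph $G(S)$ is a global object depending on the entire set $S$: swapping a single element can add or delete arcs (and hence change distances) for vertices that never appeared on the augmenting path. As a result, a maximal collection of vertex-disjoint shortest $(s,t)$-paths in $G(S)$ is \emph{not} something you can augment along one after another and expect the others to remain valid, let alone expect the layer structure to stay fixed. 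You flag the ``remaining adaptivity'' as the hard part, but it is not a residual technicality---it is the entire content of a phase, and without resolving it the per-phase bound has no proof.

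The paper takes a different route that sidesteps this issue entirely. It does not batch individual augmenting paths. Instead it works with the \emph{augmenting-set} and \emph{partial augmenting-set} formalism of Chakrabarty--Lee--Sidford--Singla--Wong, in which one directly manipulates a layered family $(B_1,A_1,\dots,B_{\ell+1})$ whose membership conditions (independence/rank constraints involving $S$, $A_k$, $B_k$) are designed so that a maximal such family can be augmented in one shot. A phase maintains a \emph{partial} augmenting set and repeatedly \emph{refines} it: the paper introduces a new three-layer refine (\texttt{RefineABA}) so that progress is charged to elements of $S$ (of which there are $\le r$, not $n$), giving $O(r/p)$ refine passes at $O(n)$ queries each to reach $|B_1|-|B_{\ell+1}|\le p$; and it replaces the tail-end path-finding by a structured subroutine (\texttt{RefinePath}) that finds \emph{valid paths with respect to the current partial augmenting set}, so that only $O(p)$ (rather than $O(p\ell)$) such paths are needed, each in $O(n\log r)$ queries. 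Balancing with $p=\sqrt{r/\log r}$ gives the $O(n\sqrt{r\log r})$ per-phase cost. Your proposal contains none of these ingredients: no invariant being preserved, no progress measure to bound the number of passes, and no replacement for the false ``frozen layers'' claim. (As a secondary point, even under your assumption the arithmetic does not land on the stated bound: $\tilde O(\sqrt r)$ batches at $O(n\log n)$ each gives $O(n\sqrt r\log n)$ per phase, not $O(n\sqrt{r\log r})$.)
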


Plugging Theorem \ref{thm:approx} in the framework of \cite{quadratic2021}, we get an improved algorithm---more efficient than the previous state-of-the-art---for exact matroid intersection which we state next.

\begin{restatable}[Exact algorithm]{theorem}{ThmExact} \label{thm:exact}
There is a randomized algorithm
which given two matroids $\cM_1 = (V,\cI_1)$ and $\cM_2 = (V, \cI_2)$ on the same ground set $V$, finds a common independent set $S\in \cI_1\cap \cI_2$ of maximum cardinality $r$, 
and w.h.p.\footnote{\emph{w.h.p. = with high probability} meaning with probability $1-n^{-c}$ for some arbitrarily large constant $c$.} 
uses $O(nr^{3/4}\log n)$ independence queries.
There is also a deterministic exact algorithm using $O(nr^{5/6}\log n)$ queries.
\end{restatable}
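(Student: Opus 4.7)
The plan is to derive Theorem \ref{thm:exact} as a direct corollary of the reduction of \cite{quadratic2021}, simply substituting our improved approximation algorithm (Theorem \ref{thm:approx}) for the one plugged in there originally. In other words, the ``proof'' is essentially to feed Theorem \ref{thm:approx} into an already-packaged black-box framework and re-balance parameters.

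At a high level, the \cite{quadratic2021} framework reduces exact matroid intersection to two tasks: (i)~produce a common independent set $S_0$ of size at least $(1-\eps)r$ via an approximation oracle, and (ii)~augment $S_0$ up to size $r$ by finding the remaining at-most-$\eps r$ augmenting paths. Task (ii) is exactly what \cite{quadratic2021} studies: they provide a randomized subroutine that carries out the $\eps r$ augmentations in fewer queries than the straightforward $\tO(n \cdot \eps r)$ needed for one-at-a-time augmentation, alongside a (weaker) deterministic counterpart. Plugging Theorem \ref{thm:approx} into task (i), the total query complexity becomes
\[
T(\eps) \;=\; O\!\left(\frac{n\sqrt{r \log r}}{\eps}\right) \;+\; T_{\mathrm{aug}}(\eps r),
\]
where $T_{\mathrm{aug}}(k)$ denotes the query cost of the relevant (randomized or deterministic) augmentation routine from \cite{quadratic2021}. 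Choosing $\eps$ to balance the two terms, and using the explicit expressions of $T_{\mathrm{aug}}$ given there, will yield $\tO(n r^{3/4})$ queries in the randomized setting and $\tO(n r^{5/6})$ in the deterministic setting.

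Because \cite{quadratic2021} already packaged their results as a black-box reduction from approximate to exact matroid intersection, the actual content of the argument is (a)~to verify that our approximation algorithm meets the reduction's input specification---that is, it returns a common independent set of size at least $(1-\eps) r$, which is exactly Theorem \ref{thm:approx}---and (b)~to carry out the optimization of $\eps$. I therefore do not foresee a genuine obstacle: the only ``hard part'' is arithmetic bookkeeping of the balance between the cheaper approximation cost $n\sqrt{r}/\eps$ and the reduction's augmentation cost, with the $\log n$ factor in the final bound absorbing the $\sqrt{\log r}$ overhead from Theorem \ref{thm:approx} together with any other logarithmic slack.
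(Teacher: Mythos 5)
Your proposal is correct and is essentially the paper's own proof: the paper implements exactly this plan, plugging \cref{thm:approx} into the framework of \cite[Algorithm~3]{quadratic2021} and choosing $\eps = r^{-1/4}$ (resp.\ $\eps = r^{-1/3}$ in the deterministic case). One small caveat worth being aware of when you "carry out the optimization of $\eps$": the augmentation cost $T_{\mathrm{aug}}(\eps r)$ is itself a two-stage quantity (Cunningham's blocking-flow up to some distance threshold $d$, then \cite{quadratic2021}'s $\tO(n\sqrt r)$-per-path subroutine for the $O(r/d)$ remaining long paths), so an internal balance $d = r^{3/4}$ is being performed inside the black box; the naive reading $T_{\mathrm{aug}}(k)=\tO(n\sqrt r\cdot k)$ would give only $\tO(nr)$ after balancing against the approximation cost, and it is the internal structure that brings this down to $\tO(nr^{3/4}+nk)$, matching the claimed bound.
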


\begin{remark}
Although we only focus on the query-complexity in this paper, we note that the
time-complexity of the algorithms are dominated by query-oracle calls.
That is, our approximation algorithm runs in
$\tO(n\sqrt{r}\mathcal{T}_{\text{ind}}/\eps)$  time, 
and the exact algorithms in $\tO(nr^{3/4} \mathcal{T}_{\text{ind}})$ (randomized)
respectively $\tO(nr^{5/6} \mathcal{T}_{\text{ind}})$ time (deterministic),
where $\mathcal{T}_{\text{ind}}$ denotes the time-complexity of the independence-oracle.
\end{remark}

\subsection{Technical Overview} \label{sec:overview}
\paragraph{Approximation algorithm.}
Our approximation algorithm (\cref{thm:approx}) is a modified version of
Chakrabarty-Lee-Sidford-Singla-Wong's $\tO(n^{1.5}/\eps^{1.5})$-query approximation algorithm
\cite[Section~6]{chakrabarty2019faster}.
The algorithm is based on the ideas of Cunningham's classic blocking-flow algorithm \cite{cunningham1986improved} and runs in $O(1/\eps)$ phases, where in each phase the algorithm seeks to find a \emph{maximal} set of \emph{augmentations} in the \emph{exchange graph}.
Given a common independent set $S\in \cI_1\cap \cI_2$, the \emph{exchange graph} $G(S)$ is a directed bipartite graph (with bipartition $(S+\{s,t\}, V\setminus S)$).
Finding a shortest $(s,t)$-path, called an \emph{augmenting path}, in $G(S)$
means one can increase the size of the common independent set $S$ by 1.
Since the exchange graph changes after each augmentation,\footnote{Unlike what happens in augmenting path algorithms for flow and bipartite matching, where the underlying graphs remain the same.}
 and we do not know how to find a single augmenting path faster than $\Omega(n)$ queries, 
the need to find several augmentations in parallel arises.
\cite[Section~6]{chakrabarty2019faster} introduces the notion of \emph{augmenting sets}: a generalization of the classical \emph{augmenting paths} but where one can perform many augmentations in parallel.

So the revised goal of the algorithm is to, in each phase, efficiently find a \emph{maximal augmenting set} (akin to a \emph{blocking-flow} in bipartite matching or flow algorithms).
Towards this goal, the algorithm
maintains a relaxed version of augmenting set---called a \emph{partial} augmenting set---and keeps \emph{refining} it to make it ``better'' (i.e.\ closer to a maximal augmenting set).
Here we give two independent improvements on top of the algorithm of \cite{chakrabarty2019faster}:
\begin{enumerate}
    \item \label{itm:approx-improvement-r}
     The algorithm of \cite{chakrabarty2019faster} refines the partial augmenting set by a sequence of operations on two adjacent distance layers in the exchange graph.
    In our algorithm, we instead consider \emph{three} consecutive layers for our basic refinement procedures.
    This lets us focus our analysis on what happens in $S$---the ``left'' side of the bipartite exchange graph---which contains at most $r$ elements in total (in contrast to \cite{chakrabarty2019faster} where the performance analysis is dependent on all $n$ elements). The number of times we need to run the refinement procedures thus depends on $r$, instead of $n$, which makes the algorithm faster when $r = o(n)$.
    \item
    \label{itm:approx-improvement-eps}
    When the partial augmenting set is ``close enough'' to a maximal augmenting set,
    \cite{chakrabarty2019faster} falls back to finding the remaining augmenting paths one at a time.
    In our algorithm, we also change to a different procedure when the partial augmenting
    set is close enough to maximal.
    The difference is that, instead of finding arbitrary augmenting paths, we find a special type of \emph{valid paths} with respect to the partial augmenting set, so that these paths can be used to further improve (refine) the partial augmenting set.
    The number of valid paths we need to find is less than the number of augmenting paths
    \cite{chakrabarty2019faster} needs to find.
    This decreases the dependency on $\eps$ in the final algorithm.
\end{enumerate}
The first improvement (\cref{itm:approx-improvement-r}) replaces the $\sqrt{n}$ term
with a $\sqrt{r}$ term in the query complexity of the algorithm.
The second improvement (\cref{itm:approx-improvement-eps}) shaves off a $1/\sqrt{\eps}$ term from the query complexity.
Together they thus bring down the query complexity from $\tO(\frac{n\sqrt{n}}{\eps \sqrt{\eps}})$ in \cite{chakrabarty2019faster} to 
$\tO(\frac{n\sqrt{r}}{\eps})$ as in our \cref{thm:approx}.
Note that these two improvements are independent of each other, and can be applied individually.

\paragraph{Exact algorithm.}
To obtain the exact algorithm (\cref{thm:exact}), we use
the framework of
Blikstad-v.d.Brand-Mukhopadhyay-Nanongkai's $\tO(n^{6/5}r^{3/5})$-query exact algorithm
\cite{quadratic2021}.
The main idea of this algorithm is to combine approximation algorithms---which can efficiently find a common independent set only $\eps r$ away from the optimal---with a randomized $\tO(n\sqrt{r})$-query subroutine to find each of the remaining \emph{few, very long} augmenting paths.
The $\tO(n^{6/5}r^{3/5})$-query exact algorithm
\cite{quadratic2021} currently uses Chakrabarty-Lee-Sidford-Singla-Wong's $\tO(n^{1.5}/\eps^{1.5})$ approximation algorithm \cite{chakrabarty2019faster} as
a subroutine. Simply replacing it with our improved approximation algorithm
(\cref{thm:approx}) yields our $\tO(nr^{3/4})$-query exact algorithm.

\section{Preliminaries} \label{sec:prelim-real}
We use the standard definitions of \emph{matroid} $\cM = (V,\cI)$;
\emph{rank} $\rk(X)$ for any $X\subseteq V$; \emph{exchange graph} $G(S)$ for a common independent set $S\in \cI_1\cap \cI_2$; and \emph{augmenting paths} in $G(S)$ throughout this paper.  
For completeness, we define them below.
We also need the notions of \emph{augmenting sets} introduced by \cite{chakrabarty2019faster}, which we also define in later this section.

\subsection*{Matroids}

\begin{definition}[Matroid]
A \emph{matroid} is a tuple $\cM = (V,\cI)$ of a \emph{ground set} $V$ of $n$ elements, and  non-empty family $\cI\subseteq 2^V$ of \emph{independent sets}
satisfying
\begin{description}
\item[Downward closure:] if $S\in\cI$, then $S'\in \cI$ for all $S'\subseteq S$.
\item[Exchange property:] if $S, S'\in \cI$, $|S| > |S'|$, then there exists $x\in S\setminus S'$ such that $S'\cup\{x\}\in\cI$.
\end{description}
\end{definition}

\begin{definition}[Set notation]
We will use $A+x$ and $A-x$ to denote $A\cup \{x\}$ respectively
$A\setminus \{x\}$, as is usual in matroid intersection literature.
We will also use $\bar{A} := V\setminus A$, $A+B := A\cup B$, and $A-B := A\setminus B$.
\end{definition}

\begin{definition}[Matroid rank]
The \emph{rank} of $A\subseteq V$, denoted by $\rk(A)$, is the size of the largest (or, equivalently, any maximal) independent set contained in $A$. It is well-known that
the rank-function is submodular,
i.e.\ $\rk(A+x) - \rk(A) \ge \rk(B+x) - \rk(B)$
whenever $A\subseteq B\subseteq V$ and $x\in V\setminus B$.\footnote{Usually denoted as the \emph{diminishing returns} property of submodular functions.}
Note that $\rk(A) = |A|$ if and only if $A\subseteq \cI$.
\end{definition}

\begin{definition}[Matroid Intersection]
Given two matroids $\cM_1 = (V, \cI_1)$
and $\cM_2 = (V, \cI_2)$ over the same ground set $V$,
a \emph{common independent set} $S$ is a set in $\cI_1\cap \cI_2$.
The \emph{matroid intersection problem} asks us to find
the largest common independent set---whose cardinality we denote by $r$.
We use $\rk_1$ and $\rk_2$ to be the rank functions of the corresponding matroids.
\end{definition}

\subsection*{The Exchange Graph} \label{sec:prelim-graph}
Many matroid intersection algorithms, e.g. those in \cite{edmonds1979matroid, aignerD, Lawler75, cunningham1986improved, nguyen2019note, quadratic2021}, are based on iteratively
finding \emph{augmenting paths} in the \emph{exchange graph}.

\begin{definition}[Exchange graph] \label{def:exchange}
Given two matroids $\cM_1 = (V, \cI_1)$
and $\cM_2 = (V, \cI_2)$ over the same ground set, and a common independent set 
$S\in\cI_1\cap \cI_2$, the \emph{exchange graph} $G(S)$ is a directed bipartite graph on vertex set $V\cup\{s,t\}$ with the following arcs (or directed edges):
\begin{enumerate}
    \item $(s,b)$ for $b \in \bar S$ when $S + b \in \cI_1$.
    \item $(b,t)$ for $b \in \bar S$ when $S + b \in \cI_2$.
    \item \label{itm:exch-i3} $(a,b)$ for $b \in \bar S, a \in S$ when $S + b - a \in \cI_1$.
    \item \label{itm:exch-i4} $(b,a)$ for $b \in \bar S, a \in S$ when $S + b - a \in \cI_2$.
\end{enumerate}
We will denote the set of elements at distance $k$ from $s$ by the distance-layer $D_k$.
\end{definition}

\begin{definition}[Shortest augmenting path]
A shortest $(s,t)$-path $p = (s, b_1, a_1, b_2, a_2, \ldots,\allowbreak a_{\ell}, b_{\ell+1},t)$ (with $b_i\in \bar{S}$ and $a_i\in S$) in $G(S)$ is called a \emph{shortest augmenting path}.
We can \emph{augment} $S$ along the path $p$ to obtain
$S'=S\oplus p = S+b_1-a_1+b_2-a_2\ldots+b_{\ell+1}$, which is well-known to also be a common independent set (with $|S'| = |S|+1$) \cite{cunningham1986improved}.
Conversely, there must exist a shortest augmenting path whenever $|S| < r$.
\end{definition}

The following lemma is very useful for $(1-\eps)$-approximation algorithms since it essentially says that one needs only to consider paths up to length $O(\frac{1}{\eps})$.

\begin{lemma}[Cunningham \cite{cunningham1986improved}] \label{clm:dists}
If the length of the shortest $(s,t)$-path in $G(S)$ is at least $2\ell+2$, then 
$|S| \ge (1-O(1/\ell))r$.
\end{lemma}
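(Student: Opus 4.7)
The plan is to prove the lemma via a volume counting argument. Fix an optimal common independent set $S^* \in \cI_1 \cap \cI_2$ with $|S^*| = r$. I would first establish the intermediate claim that $G(S)$ contains $r - |S|$ internally vertex-disjoint $(s,t)$-paths, and then use the length hypothesis to bound $r - |S|$ by $|S|/\ell$.

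For the intermediate claim, I would invoke a Brualdi/Krogdahl-style simultaneous exchange argument: since both $S$ and $S^*$ are independent in $\cM_1$ (respectively $\cM_2$), standard matroid theory yields matchings between $S \setminus S^*$ and $S^* \setminus S$ whose matched pairs realize arcs of type~\ref{itm:exch-i3} (respectively of type~\ref{itm:exch-i4}) in $G(S)$, with every element of $S \setminus S^*$ matched; the elements of $S^* \setminus S$ unmatched in the first (resp.\ second) matching are exactly those supplying $s$-arcs (resp.\ $t$-arcs). Orienting these arcs as in \cref{def:exchange}, the induced directed subgraph on $S \triangle S^* \cup \{s, t\}$ decomposes into vertex-disjoint cycles together with exactly $|S^* \setminus S| - |S \setminus S^*| = r - |S|$ internally vertex-disjoint $(s,t)$-paths.

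The counting step is then immediate from the bipartite structure of $G(S)$: every $(s,t)$-path has even length $2 + 2j$ with exactly $j$ internal vertices in $S$ (the ``$a_i$''s in the standard parameterization of augmenting paths). Since each of the $r - |S|$ disjoint paths has length $\geq 2\ell + 2$, it contains at least $\ell$ distinct elements of $S$; vertex-disjointness then forces $\ell(r - |S|) \leq |S|$, yielding $|S| \geq \frac{\ell}{\ell+1}\, r = (1 - O(1/\ell))\, r$, as desired.

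The main obstacle is the intermediate claim: the counting argument is routine once the paths are in hand, but the existence of $r - |S|$ vertex-disjoint $(s,t)$-paths is a matroid-intersection analogue of Menger's theorem, not at all immediate from \cref{def:exchange} alone, and in particular depends on the exchange axiom for both matroids used in tandem (or equivalently the Edmonds min-max duality $r = \min_{A \subseteq V}(\rk_1(A) + \rk_2(\bar A))$, which in turn controls the minimum $(s,t)$-vertex-cut in $G(S)$).
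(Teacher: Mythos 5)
Your argument is correct, but it routes through a stronger structural claim than the standard proof of this lemma (the paper itself gives no proof, deferring to Cunningham). The usual argument applies Edmonds' min-max directly to layer-based sets: taking $A = \{v : d(s,v) > 2j\}$ one shows, via circuit arguments in $G(S)$, that $\rk_1(A) \leq |S\cap A| + |D_{2j}|$ and $\rk_2(\bar A) \leq |S\cap\bar A|$, hence $r \leq |S| + |D_{2j}|$, i.e.\ $|D_{2j}| \geq r - |S|$ for every $1\le j\le\ell$; since the $D_{2j}$ are disjoint subsets of $S$, this immediately gives $\ell(r-|S|)\le|S|$ without ever exhibiting disjoint paths. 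You instead pass through the assertion that $G(S)$ has $r-|S|$ internally vertex-disjoint $(s,t)$-paths, which is a strictly stronger fact; it is true and your Brualdi/Krogdahl sketch can be made rigorous, but the matching lemma you invoke is not quite an off-the-shelf Brualdi: you need a matching $M_1$ between $S\setminus S^*$ and $S^*\setminus S$ that simultaneously (i) saturates $S\setminus S^*$ and (ii) saturates every $b\in S^*\setminus S$ with $S+b\notin\cI_1$, so that unsaturated elements really do supply $s$-arcs. Both (i) and (ii) individually follow from circuit/Hall arguments, and a Mendelsohn--Dulmage-style combination then gives a single matching achieving both. (Your phrase ``exactly those supplying $s$-arcs'' is also a slight overclaim — a matching may saturate some free elements too — but only ``unsaturated $\Rightarrow$ free'' is needed.) Alternatively, the Menger route you mention at the end also works cleanly: given any vertex cut $C$, the set $A$ of vertices unreachable from $s$ in $G(S)-C$ satisfies $\rk_1(A)+\rk_2(\bar A)\le |S|+|C|$, so min-max gives $|C|\ge r-|S|$. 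Either way you are correctly flagging the disjoint-paths claim as the real content; your counting step (each path meets $\ge\ell$ distinct even layers) is then routine and right, giving $|S|\ge\frac{\ell}{\ell+1}r$. What your approach buys is a structural Menger-type statement about $G(S)$ that is reusable (and closer in spirit to the blocking-flow intuition); what the standard layer-rank argument buys is a shorter, self-contained proof that avoids Hall/Menger machinery entirely.
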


\begin{lemma}[Exchange discovery by binary search \cite{chakrabarty2019faster, nguyen2019note}] \label{lem:find-exchange}
Suppose
$\cM = (V,\cI)$ is a matroid, $Y\subseteq X\in \cI$, and $b\not\in X$ such that $X+b\notin\cI$.
Then, using $O(\log |Y|)$ independence queries one can find
some $a\in Y$ such that $X+b-a\in\cI$ or determine that none exist.\footnote{When $X = S$, we can use this to find edges of type \ref{itm:exch-i3} and \ref{itm:exch-i4} in the exchange graph.}
\end{lemma}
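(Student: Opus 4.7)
The plan is to exploit the \emph{unique circuit} structure arising from augmenting a single element into an independent set. Since $X\in\cI$ but $X+b\notin\cI$, there is a unique circuit $C\subseteq X+b$ containing $b$, and the set of valid ``swap partners'' is exactly $C-b$. Thus the problem reduces to finding an element of $Y\cap(C-b)$, or reporting that this intersection is empty.

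The key structural observation I would prove first is: for any $Z\subseteq X$,
\[
X-Z+b\in\cI \quad\iff\quad Z\cap(C-b)\neq\emptyset.
\]
The direction ``$\Leftarrow$'' follows because removing any element of $C-b$ from $X+b$ destroys the only circuit (since all circuits of $X+b$ that do not contain $b$ would have to lie in $X\in\cI$, which is impossible). The direction ``$\Rightarrow$'' follows because if $Z$ avoids $C-b$, then $C\subseteq X-Z+b$, so the set is dependent. This equivalence is exactly what lets a binary search succeed.

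Given this, the algorithm is straightforward. First, query whether $X-Y+b\in\cI$; if not, then $Y\cap(C-b)=\emptyset$ and we report that no such $a$ exists. Otherwise, partition $Y=Y_1\cup Y_2$ into two roughly equal halves and query whether $X-Y_1+b\in\cI$. By the equivalence, a positive answer certifies $Y_1\cap(C-b)\neq\emptyset$, so we recurse on $Y_1$; otherwise, the witness must lie in $Y_2$, and we recurse there. We terminate when $|Y|=1$, returning the single remaining element. Each level of recursion uses a single independence query, so the total cost is $1+\lceil\log_2|Y|\rceil=O(\log|Y|)$ queries.

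I do not anticipate a real obstacle here; the only subtlety is making sure the unique-circuit argument is stated correctly, since it is what underlies both the correctness of the initial feasibility test and of each recursive step. Everything else is a routine halving argument.
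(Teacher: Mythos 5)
Your proof is correct and is essentially the standard binary-search argument that the paper cites from \cite{chakrabarty2019faster, nguyen2019note} rather than re-proving. The unique-circuit characterization ($X-Z+b\in\cI$ iff $Z$ meets $C-b$, where $C$ is the fundamental circuit of $b$ in $X+b$) is a clean way to justify both the initial feasibility test on $Y$ and the recursive halving step, and your query count of $1+\lceil\log_2|Y|\rceil$ matches the claimed bound.
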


\subsection*{Augmenting Sets}
A generalization of the classical \emph{augmenting paths}---called \emph{augmenting sets}---play a key role in the approximation algorithm
of \cite{chakrabarty2019faster}, and therefore also in the modified version of this algorithm presented in this paper.
In order to efficiently find ``good'' augmenting sets, the algorithm works with
a relaxed form of them instead: \emph{partial} augmenting sets.
The following definitions and key properties of (partial) augmenting sets are copied from \cite{chakrabarty2019faster} where one can find the corresponding proofs.

\begin{definition}[{Augmenting Sets, from \cite[Definition~24]{chakrabarty2019faster}}]
 Let $S\in\cI_{1}\cap\cI_{2}$ and $G(S)$ be the corresponding exchange
graph with shortest $(s,t)$-path of length $2(\ell+1)$ and distance layers $D_1, D_2, \ldots, D_{2\ell+1}$.
A collection of sets $\Pi_{\ell}:=(B_{1},A_{1},B_{2},A_{2},\ldots,A_{\ell},B_{\ell+1})$
form an \emph{augmenting set} (of \emph{width} $w$)
in $G(S)$ if the following conditions are satisfied: 
\begin{enumerate}[(a)]
\item For $1\leq k\leq\ell+1$, we have $A_{k}\subseteq D_{2k}$ and $B_{k}\subseteq D_{2k-1}$.
\item $|B_{1}|=|A_{1}|=|B_{2}|=\cdots=|B_{\ell+1}|=w$
\item $S+B_{1}\in\mathcal{I}_{1}$
\item $S+B_{\ell+1}\in\mathcal{I}_{2}$
\item For all $1\leq k\leq\ell$, we have $S-A_{k}+B_{k+1}\in\mathcal{I}_{1}$
\item For all $1\leq k\leq\ell$, we have $S-A_{k}+B_{k}\in\mathcal{I}_{2}$
\end{enumerate}
\end{definition}

\begin{definition}[{Partial Augmenting Sets, from \cite[Definition~37]{chakrabarty2019faster}}]
We say that $\Phi_{\ell}:=(B_{1},A_{1},B_{2},A_{2},\ldots,A_{\ell},B_{\ell+1})$
 forms a \emph{partial augmenting set} if it satisfies the conditions (a), (c), (d), and (e)
 of an \emph{augmenting set}, plus the following two relaxed conditions:
\begin{description}
\item[(b)]  \label{itm:par-b} $|B_{1}|\ge|A_{1}|\ge|B_{2}|\ge\cdots\ge|B_{\ell+1}|$. 
\item[(f)] \label{itm:par-f} For all $1\leq k\leq\ell$, we have $\rk_{2}(S-A_{k}+B_{k})=\rk_{2}(S)$.
\end{description}
\end{definition}

\begin{theorem}[{from \cite[Theorem~25]{chakrabarty2019faster}}]
\label{thm:aug-means-aug} Let $\Pi_{\ell}:=(B_{1},A_{1},B_{2},A_{2},\cdots,B_{\ell},A_{\ell},B_{\ell+1})$
be the an augmenting set in the exchange graph $G(S)$.
 Then the set $S':=S\oplus\Pi_{\ell}:=S+B_{1}-A_{1}+B_{2}-\cdots+B_{\ell}-A_{\ell}+B_{\ell+1}$ is a common independent set.\footnote{Note that $|S'| = |S|+w$, where $w$ is the width of $\Pi_\ell$. In particular, an augmenting set with width $w = 1$ is exactly an augmenting path.}
\end{theorem}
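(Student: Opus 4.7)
The plan is to prove $S'\in\cI_1$ and $S'\in\cI_2$ separately. The augmenting-set conditions (a)--(f) enjoy a symmetry under reversing the sequence $(B_1,A_1,\ldots,B_{\ell+1})$ and simultaneously swapping $\cM_1\leftrightarrow\cM_2$: conditions (c) and (d) swap, (e) and (f) swap, and the distance layers of the reversed exchange graph realign via $D_k\mapsto D_{2(\ell+1)-k}$. Since this transformation exchanges the two claims, it suffices to prove $S'\in\cI_1$.

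A quick cardinality check shows $|S'|=|S|+w$: the $B_k\subseteq D_{2k-1}$ are pairwise disjoint and disjoint from $S$, while the $A_k\subseteq D_{2k}$ are pairwise disjoint subsets of $S$. Since each of the $2\ell+1$ sets has size $w$, we get
\[
|S'|\;=\;|S|-\ell w+(\ell+1)w\;=\;|S|+w,
\]
so the goal reduces to $\rk_1(S')\ge|S|+w$. My plan for this rank bound is to first extract, for each $k\in\{1,\ldots,\ell\}$, a perfect matching $\sigma_k\colon A_k\to B_{k+1}$ in the bipartite exchange graph such that $S-a+\sigma_k(a)\in\cI_1$ for every $a\in A_k$. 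This follows from Brualdi's basis-exchange theorem applied to the bases $S$ and $S-A_k+B_{k+1}$ of the $\cM_1$-restriction to $S+B_{k+1}$: equality of ranks is guaranteed by the layering, since for $k\ge 1$, $B_{k+1}\cap D_1=\emptyset$ forces $\rk_1(S+B_{k+1})=|S|$. Analogously, condition (f) and $\cM_2$ yield matchings $\tau_k\colon B_k\to A_k$. Together with trivial ``matchings'' at the source and sink (which exist by downward closure from (c) and (d)), concatenation produces $w$ pairwise vertex-disjoint length-$2(\ell+1)$ augmenting paths $p_1,\ldots,p_w$ in $G(S)$ whose union (as a multiset) equals $\Pi_\ell$.

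The main obstacle will be converting this structural decomposition into the rank equality $\rk_1(S')=|S|+w$. Augmenting along the $w$ paths sequentially is not immediately valid because the exchange graph changes after each augmentation, so I would instead use a direct inductive rank computation on the intermediate sets
\[
X_k\;:=\;S+(B_1+\cdots+B_{k+1})-(A_1+\cdots+A_k),
\]
aiming to show $\rk_1(X_k)\ge|S|+w$ for all $k=0,1,\ldots,\ell$. The base case $X_0=S+B_1$ is precisely condition (c). Each inductive step from $X_{k-1}$ to $X_k=X_{k-1}-A_k+B_{k+1}$ combines the hypothesis $\rk_1(X_{k-1})=|S|+w$ with the matching $\sigma_k$ and repeated applications of the matroid exchange axiom, exploiting the vertex-disjointness of the $p_j$'s to ensure that new swaps do not interfere with elements already present in $X_{k-1}$. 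Setting $k=\ell$ yields $\rk_1(S')=|S|+w$ as required; the $\cI_2$ claim then follows by the reversal symmetry noted above.
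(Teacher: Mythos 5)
The paper does not supply its own proof of this statement: it is imported verbatim from Chakrabarty--Lee--Sidford--Singla--Wong (Theorem~25 there), and the paper explicitly defers all proofs of these augmenting-set facts to that reference. So the comparison below is against the CLSSW proof, not against anything in this paper.

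Your symmetry reduction, cardinality count, and the Brualdi-matching extraction are all fine (and the observation that the layering forces $\rk_1(S+B_{k+1})=|S|$ so that Brualdi applies is a good one). But the decomposition into $w$ vertex-disjoint shortest augmenting paths is ultimately a detour: you correctly note you cannot augment along them sequentially, and then you drop them in favor of the $X_k$ induction -- at which point the paths and their disjointness play no verifiable role. The real problem is the inductive step $X_{k-1}\in\cI_1\Rightarrow X_k\in\cI_1$, which is precisely the entire nontrivial content of the theorem, and this you do not actually prove. The phrase ``combines the hypothesis ... with the matching $\sigma_k$ and repeated applications of the matroid exchange axiom, exploiting the vertex-disjointness'' is not an argument: the matching guarantees $S-a+\sigma_k(a)\in\cI_1$, but this says nothing about $X_{k-1}-a+\sigma_k(a)$, and neither the exchange axiom nor vertex-disjointness bridges that gap (vertex-disjointness is what suffices in bipartite matching, but in a matroid the swaps do interact through circuits that need not respect the paths). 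What makes the step true at all is the layering constraint $B_{j}\cap D_1=\emptyset$ for $j\geq 2$, which forces $\rk_1(S+B_1+\cdots+B_{k+1})=|S|+w$ and hence $\rk_1\bigl(Y_{k+1}-A_k\bigr)=|S|+w$; your sketch never invokes this rank bound, and without it the inductive step is simply false (easy rank-$3$ counterexamples exist once the layering is dropped). Even with that bound, one still needs a genuine argument that removing all of $A_1,\ldots,A_k$ simultaneously preserves rank, and that is exactly where CLSSW's proof spends several technical lemmas. As written, your proposal identifies the right inductive statement and assembles the right preliminary structure, but leaves the heart of the theorem unproved.
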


We also need the notion of \emph{maximal} augmenting sets, which naturally correspond to a maximal ordered collection of shortest augmenting paths, where, after augmentation, the $(s,t)$-distance must have increased. The following are due to \cite{chakrabarty2019faster}.

\begin{definition}[{Maximal Augmenting Sets,
from \cite[Definition~35]{chakrabarty2019faster}}]
\label{def:subset}
Let $\Pi_{\ell}=(B_{1},A_{1},B_{2},\allowbreak\cdots,\allowbreak B_{\ell},A_{\ell},B_{\ell+1})$
and $\tilde{\Pi}_{\ell}=(\tB_{1},\tA_{1},\tB_{2},\cdots,\tB_{\ell},\tA_{\ell},\tB_{\ell+1})$
be two augmenting sets in $G(S)$. We say $\tilde{\Pi}_{\ell}$ {\em contains}
$\Pi_{\ell}$
if $B_{k}\subseteq\tB_{k}$ and $A_{k}\subseteq\tA_{k}$,
for all $k$.
An augmenting set $\Pi_{\ell}$ is called \emph{maximal} if there
exists no other augmenting set $\tilde{\Pi}_{\ell}$ containing $\Pi_{\ell}$. 
\end{definition}

\begin{theorem}[from {\cite[Theorem~36]{chakrabarty2019faster}}]
\label{thm:max-inc-dist}
An augmenting set $\Pi_{\ell}$ is maximal if and only if there is no augmenting path of length at most $2(\ell+1)$ in $G(S\oplus \Pi_{\ell})$.
\end{theorem}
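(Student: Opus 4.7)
The plan is to prove both directions by contrapositive, so that in each case we convert a short augmenting path in $G(S\oplus \Pi_{\ell})$ into a strictly larger augmenting set in $G(S)$ containing $\Pi_{\ell}$, and vice versa.

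For the ``if'' direction I would assume $\Pi_{\ell}$ is not maximal and exhibit a short augmenting path in $G(S\oplus\Pi_{\ell})$. Take $\tilde{\Pi}_{\ell}$ strictly containing $\Pi_{\ell}$, necessarily of width $w+1$ where $w$ is the width of $\Pi_{\ell}$. By condition (b) there is exactly one extra element per layer: $\{b_k'\}=\tilde{B}_k\setminus B_k\subseteq D_{2k-1}$ and $\{a_k'\}=\tilde{A}_k\setminus A_k\subseteq D_{2k}$. I would show that these extras line up, in $G(S')$ with $S':=S\oplus\Pi_{\ell}$, as an $(s,t)$-path of length $2(\ell+1)$: after augmenting along $\Pi_{\ell}$, the roles of $B_k$ and $A_k$ flip inside $S'$ versus $\bar{S}'$, and the independence witnesses hidden in conditions (c)--(f) of $\tilde{\Pi}_{\ell}$ translate, via the edge definitions in \cref{def:exchange}, to edges $(s,b_1')$, $(b_k',a_k')$, $(a_k',b_{k+1}')$, and $(b_{\ell+1}',t)$ in $G(S')$.

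For the ``only if'' direction, I would assume an augmenting path $P=(s,b_1',a_1',\dots,a_\ell',b_{\ell+1}',t)$ of length at most $2(\ell+1)$ in $G(S')$ and define the candidate $\tilde{B}_k:=B_k+b_k'$, $\tilde{A}_k:=A_k+a_k'$, and then verify that $\tilde{\Pi}_{\ell}:=(\tilde B_1,\tilde A_1,\dots,\tilde B_{\ell+1})$ is an augmenting set in $G(S)$ containing $\Pi_{\ell}$. Condition (a) follows because $P$ is a shortest path in $G(S')$ (so $b_k'\in D_{2k-1}$ and $a_k'\in D_{2k}$ with respect to $S'$), and because the layer structure of $G(S)$ agrees with that of $G(S')$ on the elements used by $\Pi_{\ell}$ and $P$. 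Condition (b) gains exactly one unit of width in every slot. Conditions (c)--(f) would be shown by combining the independence/rank statements already known for $\Pi_{\ell}$ (in $G(S)$) with the single-edge exchanges witnessed by $P$ (in $G(S')$), essentially via the matroid exchange property applied to a common independent set that differs from $S$ only by the swaps encoded by $\Pi_{\ell}$.

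The main obstacle I expect is the bookkeeping in this last step: translating an edge of $G(S')$ into a legitimate exchange in $G(S)$ after a collection of swaps $\Pi_\ell$ has been performed, and checking that the rank-preservation condition (f) and the global independence conditions (c)--(e) are stable under appending the new elements $b_k',a_k'$. The cleanest tools for this are submodularity of $\rk_1,\rk_2$ together with \cref{lem:find-exchange} applied inductively over the layer index $k$. Indeed, because the statement is quoted verbatim from \cite{chakrabarty2019faster}, the natural plan closely tracks their original proof of Theorem~36 there.
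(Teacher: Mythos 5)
The paper does not actually prove \cref{thm:max-inc-dist}: it is imported verbatim from \cite[Theorem~36]{chakrabarty2019faster}, and the paper explicitly defers to that reference for the proofs of the (partial) augmenting-set facts it quotes. So there is no ``paper's own proof'' to compare against; what you have written is a reconstruction of the argument in the cited work, and it should be judged on its own.

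Your outline tracks the correct high-level idea — convert a strictly larger augmenting set in $G(S)$ into a short augmenting path in $G(S\oplus\Pi_\ell)$ and vice versa — but there are two genuine gaps. First, the assertion that a strict container $\tilde\Pi_\ell$ is ``necessarily of width $w+1$'' is not justified: condition (b) and containment only force the width $w'$ of $\tilde\Pi_\ell$ to satisfy $w'>w$. You would need a separate lemma that from any strict container one can extract a width-$(w+1)$ augmenting set still containing $\Pi_\ell$ (or restructure the argument so it does not rely on exactly one extra element per layer). Second, and more substantively, the step you flag as ``the main obstacle'' is in fact the whole content of the theorem: showing that the local exchanges guaranteed by conditions (c)--(f) of $\tilde\Pi_\ell$ relative to $S$ translate into edges of $G(S\oplus\Pi_\ell)$ (and conversely that an augmenting path of $G(S\oplus\Pi_\ell)$ can be appended layer-by-layer to $\Pi_\ell$ while preserving (c)--(f)). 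This requires a careful inductive matroid argument about how a batch of swaps interacts with a single-element exchange, and it does not follow from submodularity and \cref{lem:find-exchange} alone. As it stands, the proposal is a sensible roadmap but not yet a proof, and at least the width-$(w+1)$ claim is an error that must be repaired.
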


\section{Improved Approximation Algorithm} \label{sec:approx}

Our algorithm closely follows the algorithm of
Chakrabarty-Lee-Sidford-Singla-Wong \cite[Section 6]{chakrabarty2019faster}.
The algorithm runs in phases, where in each phase the algorithm finds
a maximal set of augmentations to perform, so that the $(s,t)$-distance in the exchange graph increases between phases. By \Cref{clm:dists}, only $O(1/\eps)$ phases are necessary.

In the beginning of a phase, the algorithm runs a breadth-first-search to compute the distance layers $D_1, D_2, \ldots D_{2\ell+1}$ in the exchange graph $G(S)$,
where $S$ is the current common independent set.
The total number of independence queries, across all phases, for these BFS's can be bounded by $O(n \log(r)/\eps)$. We refer to \cite[Algorithm~4, Lemma~19, and Proof of Theorem 21]{chakrabarty2019faster} for how to implement
such a BFS efficiently.

After the distance layers have been found, the search for a maximal augmenting set begins.
We start by summarizing on a high level how the algorithm of \cite{chakrabarty2019faster} does this in two stages:
\begin{enumerate}
\item The first stage keeps track of a \emph{partial} augmenting set which it keeps \emph{refining} by a series of operations on adjacent distance layers in the exchange graph, to make it closer to a \emph{maximal} augmenting set.

\item When we are ``close enough'' to a \emph{maximum} augmenting set, the second stage handles the last few augmenting paths---for which the first stage slows down---by finding the remaining augmenting paths individually one at a time.
\end{enumerate}
Here we give two independent improvements over the algorithm of \cite{chakrabarty2019faster}, one for each stage. The first improvement
is to replace the refine operations in the first stage by a new subroutine \texttt{RefineABA} (\cref{sec:refineaba}) working on \emph{three} consecutive layers instead of two. This allows us to measure progress in terms of $r$ instead of $n$.
The second improvement is for the second stage where we, instead of finding arbitrary augmenting paths, work directly on top of the output of the first stage and find a specific type of \emph{valid paths} 
with respect to the partial augmenting set, using a new
a subroutine \texttt{RefinePath} (\cref{sec:refinepath}).

\subsection{Implementing a Phase: Refining}

The basic refining ideas and procedures in this section are the same as in \cite{chakrabarty2019faster}. The goal is to keep track of
a partial augmenting set $\Phi_\ell = (B_1, A_1, B_2, \ldots, A_{\ell}, B_{\ell+1})$
which is iteratively made ``better'' through some \emph{refine procedures}.
Eventually, the partial augmenting set will become a maximal augmenting set, which concludes the phase. Towards this goal,  we maintain three types of elements in each layer:

\begin{description}
\item[Selected.] Denoted by $A_k$ or $B_k$. These form the partial augmenting set $\Phi_\ell = (B_1, A_1, B_2,\allowbreak\ldots,\allowbreak A_{\ell}, B_{\ell+1})$.
\item[Removed.] Denoted by $R_k$. These elements are safe to disregard from further computation (i.e. deemed useless) when refining $\Phi_\ell$ towards a maximal augmenting set.
\item[Fresh.] Denoted by $F_k$. These are the elements that are neither \emph{selected} nor \emph{removed}.
\end{description}

\begin{figure}[ht]
\begin{center}
\includegraphics[width=9cm]{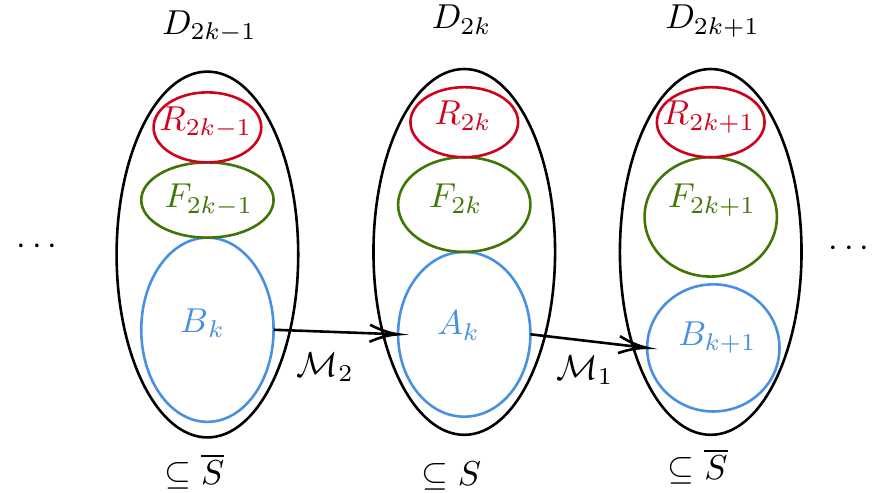}
\end{center}
\caption{
An illustration of a few neighboring layers.
Note that $(B_k, R_{2k-1}, F_{2k-1})$ form a partition of odd layer $D_{2k-1}\subseteq \bar{S}$, and $(A_k, R_{2k}, F_{2k})$ form a partition of even layer $D_{2k}\subseteq S$.
}
\end{figure}

Elements can change their types from \emph{fresh} $\to $ \emph{selected} $\to$
\emph{removed}, but never in the other direction.
Initially, we start with all elements being fresh.\footnote{This differs slightly from \cite{chakrabarty2019faster}, where the initially $B_1$ is greedily picked to be maximal so that $S+B_1\in \cI_1$, while the rest of the elements are fresh.}
For convenience, we also define ``imaginary'' layers $D_0$ and $D_{2\ell+2}$
with $A_{0} = R_{0} = F_{0} = D_{0} =  A_{\ell+1} = R_{2\ell+2} = F_{2\ell+2} = D_{2\ell+2} = \emptyset$.
The algorithm maintains the following \emph{phase invariants} (which are initially satisfied) during the refinement process:

\begin{definition}[{Phase Invariants, from \cite[Section~6.3.2]{chakrabarty2019faster}}]
The \emph{phase invariants} are:
\begin{description}
\item[(a-b)] $\Phi_\ell = (B_1, A_1, B_2, \ldots, A_\ell, B_{\ell+1})$ forms a partial augmenting set.\footnote{The naming of this invariant as (a-b) is to be consistent with \cite{chakrabarty2019faster} where this condition is split up into two separate items (a) and (b).}

\item[(c)] For $1\leq k\leq\ell$, for any $X\subseteq B_{k+1}+F_{2k+1}=D_{2k+1}-R_{2k+1}$,
if $S-(A_{k}+R_{2k})+X\in\mathcal{I}_{1}$ then $S-A_{k}+X\in\mathcal{I}_{1}$. 
\footnote{An equivalent condition for (c) is:
$\rk_1(W - R_{2k}) = \rk_1(W)-|R_{2k}|$, where $W = S-A_{k} + (D_{2k+1}-R_{2k+1})$.
}

\item[(d)]  $ \rk_2(W + R_{2k-1}) = \rk_2(W)$ where $W = S-(D_{2k}-R_{2k})+B_{k}$.
\end{description}
\end{definition}

\begin{remark}
Invariant (c) essentially says that
if $R_{2k+1}$ is ``useless'', then so is $R_{2k}$.
Similarly, Invariant (d) says that if $R_{2k}$
is ``useless'', then so is $R_{2k-1}$.
Together they imply that we can safely ignore all the removed elements.
\end{remark}

\begin{lemma} \label{lem:invariants-maximality}
Suppose that (i) the phase invariants hold; (ii) $|B_1| = |A_1| = \cdots = |B_{\ell+1}|$;
and (iii) $B_1$ is a maximal subset of $D_1\setminus R_1$ satisfying $S+B_1\in \cI_1$.
Then $(B_1, A_1, \ldots, B_{\ell+1})$ is a maximal augmenting set.
\end{lemma}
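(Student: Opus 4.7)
My plan is to verify the two defining properties of a maximal augmenting set in turn: first, that $\Phi_\ell$ is an augmenting set at all, and second, that it is maximal. The upgrade from \emph{partial} to \emph{full} augmenting set is the easy part. The partial augmenting set conditions already give (a), (c), (d), (e), and the relaxed forms of (b) and (f). Condition~(ii) immediately upgrades the chain of inequalities in relaxed~(b) to the equality chain $|B_1|=|A_1|=\cdots=|B_{\ell+1}|$ required by~(b). For~(f) I would argue as follows: relaxed~(f) says $\rk_2(S-A_k+B_k)=\rk_2(S)=|S|$, and by~(ii) we have $|A_k|=|B_k|$, so $|S-A_k+B_k|=|S|$. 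A set whose rank equals its cardinality is independent, hence $S-A_k+B_k\in\mathcal{I}_2$, which is exactly condition~(f).

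For maximality, the plan is to invoke \cref{thm:max-inc-dist}: it suffices to show that $G(S')$, where $S':=S\oplus\Phi_\ell$, has no augmenting path of length $\le 2(\ell+1)$. I would argue this by contradiction, assuming such a path $p=(s,b_1',a_1',\ldots,a_\ell',b_{\ell+1}',t)$ exists and using it to extend $\Phi_\ell$ into a strictly larger augmenting set (or equivalently to extend $B_1$), contradicting either condition~(iii) or the invariants. The key technical tool is that invariants~(c) and~(d) together let us treat removed elements as genuinely ``useless'': the submodular/rank condition in~(c) says any $\mathcal{I}_1$-independence statement we need that involves $R_{2k}$ on the $S$-side can be re-derived without $R_{2k}$ once $R_{2k+1}$ is already discarded; dually, (d) lets us discard $R_{2k-1}$ in $\mathcal{I}_2$-statements once $R_{2k}$ is discarded. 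Peeling these off level by level, a hypothetical augmenting path in $G(S')$ can be rerouted so that it uses only fresh and selected elements.

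Concretely, I would trace through the path $p$ one layer at a time: at layer $D_1$ the endpoint $b_1'$ must lie in $F_1\cup R_1$; using invariant~(d) at level $k=1$, I would rule out $R_1$; hence $b_1'\in F_1$ and $S+b_1'\in\mathcal{I}_1$. But then $S+B_1+b_1'$ being dependent (by maximality of $B_1$ in $D_1\setminus R_1$, condition~(iii)) together with the equality $|B_1|=|A_1|$ and the structure of the subsequent layer would force some exchange with $A_1$ that can be absorbed to produce a genuine extension of the whole augmenting set --- violating maximality of $B_1$. The same layer-by-layer argument propagates through $a_k'$ and $b_{k+1}'$ using invariants~(c) and~(d) alternately, together with condition~(ii) which keeps widths aligned so that exchange steps respect the equality chain.

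The hard part is this last argument: carefully showing that \emph{any} hypothetical short augmenting path in $G(S')$ can be converted into an extension of $\Phi_\ell$ within $G(S)$. The rank-submodularity encoded in invariants~(c) and~(d) is precisely the right hammer, but it must be applied inductively across layers while being careful that exchange edges used by $p$ in $G(S')$ correspond (after the swap $S\mapsto S'$) to edges available in $G(S)$ once removed elements are ignored. I expect this to mirror the proof of the analogous statement in \cite{chakrabarty2019faster}, with only cosmetic differences since our partial augmenting sets and invariants are defined identically.
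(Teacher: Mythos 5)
Your first part is correct and fills in a step the paper leaves implicit: under (ii), relaxed~(b) upgrades to equality, and since $|A_k|=|B_k|$ gives $|S-A_k+B_k|=|S|=\rk_2(S-A_k+B_k)$, relaxed~(f) upgrades to membership in $\mathcal{I}_2$, so $\Phi_\ell$ is a genuine augmenting set. Your second part also takes the same overall route as the paper's proof sketch: assume non-maximality, invoke \cref{thm:max-inc-dist} to get a short augmenting path $(b_1',a_1',\ldots,b_{\ell+1}')$ in $G(S\oplus\Phi_\ell)$, and use the phase invariants layer by layer to get a contradiction.

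However, the way you try to carry out the layer-by-layer step is backwards, and one step is wrong as stated. You claim ``using invariant~(d) at level $k=1$, I would rule out $R_1$; hence $b_1'\in F_1$.'' Invariant~(d) at $k=1$ is a statement about $\rk_2$ (i.e., $\mathcal{M}_2$-independence), whereas the arc $(s,b_1')$ in $G(S')$ is an $\mathcal{M}_1$-statement; invariant~(d) alone cannot rule out $b_1'\in R_1$. The paper's argument runs in the opposite direction: condition~(iii) shows $b_1'$ \emph{cannot} be fresh, hence $b_1'\in R_1$; invariants~(c) and~(d) then propagate removed-ness \emph{forward} along the path ($a_1'$ removed, then $b_2'$, \ldots), and the contradiction arises at the last element $b_{\ell+1}'$, which invariant~(d) forbids from being removed. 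Ruling out $R_1$ for $b_1'$ is not a local one-invariant step --- it is equivalent to the entire forward propagation. Since both you and the paper ultimately defer the technical propagation to the proof of the analogous lemma in \cite{chakrabarty2019faster}, this is a bookkeeping error rather than a missing idea, but as written your step 2 does not go through.
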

\begin{proof}[Proof idea.]
(See \cite[Proof of Lemma~44]{chakrabarty2019faster} for a complete proof).
If it was not maximal, there exists an augmenting path
$(b_1, a_1, \ldots, b_{\ell+1})$ in the exchange graph after
augmenting along $(B_1, A_1, \ldots, B_{\ell+1})$.
However, (iii) then says that $b_1$ must have been removed since it cannot be fresh.
But if $b_1$ is removed, then so was $a_1$, then so was $b_2$ etc., by invariants (c)
and (d) (this requires a technical, but straightforward, argument). However, $b_{\ell+1}$ cannot have been removed (by invariant (d)),
which gives the desired contradiction.
\end{proof}

\subsubsection{Refining Two Adjacent Layers}
We now present the basic refinement procedures from \cite{chakrabarty2019faster},
which
are
operations on neighboring layers.
There is some asymmetry in how (odd, even) and (even, odd) layer-pairs 
are handled, arising from the inherent asymmetry of the independence query between $S$
and $\bar{S}$, but the ideas are the same.
\begin{description}
\item[$\texttt{RefineAB}(k)$] extends $B_{k+1}$ as much as possible while
respecting invariant (a-b) (Lines 1-2).
Then a maximal collection of element in $A_{k}$ which can be ``matched'' to $B_{k+1}$ is found, and the others elements in $A_{k}$ are removed (Lines 3-4).

\item[$\texttt{RefineBA}(k)$] finds a maximal subset $B_{k}$ that can be ``matched''
to $A_k+F_{2k}$, and removes the other elements of $B_k$ (Lines 1-2).
Then $A_k$ is extended with elements from $F_{2k}$ which are the endpoints of the above ``matching'' (Lines 3-4).
\end{description}

\begin{algorithm}[H]
\caption{$\texttt{RefineAB}(k)$\hfill(called $\texttt{Refine1}$ in \cite[Algorithm~9]{chakrabarty2019faster})}
\begin{algorithmic}[1]

\State
Find maximal $B\subseteq F_{2k+1}$ s.t. $S-A_{k}+B_{k+1}+B\in\mathcal{I}_{1}$

\State
$B_{k+1}\longleftarrow B_{k+1}+B,F_{2k+1}\longleftarrow F_{2k+1}-B$

\State
Find maximal $A\subseteq A_{k}$ s.t. $S-A_{k}+B_{k+1}+A\in\mathcal{I}_{1}$

\State
$A_{k}\longleftarrow A_{k}-A,R_{2k}\longleftarrow R_{2k}+A$

\end{algorithmic}
\end{algorithm}

\begin{algorithm}[ht]
\caption{$\texttt{RefineBA}(k)$\hfill(called $\texttt{Refine2}$ in \cite[Algorithm~10]{chakrabarty2019faster})}
\begin{algorithmic}[1]

\State
Find maximal $B\subseteq B_{k}$ s.t. $S-(D_{2k}-R_{2k})+B\in\mathcal{I}_{2}$

\State
$R_{2k-1}\longleftarrow R_{2k-1}+B_{k}\backslash B,B_{k}\longleftarrow B$

\State
Find maximal $A\subseteq F_{2k}$ s.t. $S-\left(D_{2k}-R_{2k}\right)+B_{k}+A\in\mathcal{I}_{2}$

\State
$A_{k}\longleftarrow A_{k}+F_{2k}\backslash A,F_{2k}\longleftarrow A$

\end{algorithmic}
\end{algorithm}

The following properties of the \texttt{RefineAB} and \texttt{RefineBA} methods
are proven in  \cite{chakrabarty2019faster}.

\begin{lemma}[{from \cite[Lemmas~40-42]{chakrabarty2019faster}}]
\label{lemma:refine-invariants}
Both $\texttt{RefineAB}$
and $\texttt{RefineBA}$ preserve the invariants.
Also: after $\texttt{RefineAB}(k)$ is run, we have
$|A_{k}|=|B_{k+1}|$ (unless $k=0$). After $\texttt{RefineBA}(k)$ is run, we have
$|B_{k}|=|A_{k}|$ (unless $k=\ell+1$). 
\end{lemma}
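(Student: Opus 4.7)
My plan is to check each invariant for \texttt{RefineAB}$(k)$ separately; the argument for \texttt{RefineBA}$(k)$ is structurally identical modulo swapping the roles of $\cM_1 \leftrightarrow \cM_2$ and of the parities of the two layers. First I would dispatch the easy conditions: part (a) of the partial-augmenting-set definition is preserved because the procedure only reshuffles elements within the layers $D_{2k}$ and $D_{2k+1}$ among the categories \emph{fresh/selected/removed}; parts (c) and (d) of the partial augmenting set (involving $B_1$ and $B_{\ell+1}$) are untouched for generic $1 \le k \le \ell-1$, and the edge cases $k = 0, \ell$ need only a short extra check from how $F_1, F_{2\ell+1}$ sit inside $D_1, D_{2\ell+1}$; part (e) for index $k$ is explicitly enforced by the choice of $B$ in line~1, and after line~4 the set becomes $S - (A_k - A) + B_{k+1} = (S - A_k + B_{k+1}) + A$ which remains in $\cI_1$ by line~3. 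The phase invariants (c) and (d) about the ``usefulness'' of the removed elements transfer from their pre-procedure versions using the maximality of the selections in lines~1 and~3 together with a standard matroid-contraction argument.

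The two nontrivial parts---the size equality $|A_k| = |B_{k+1}|$ and preservation of (f)---both reduce to the same submodular rank identity. For the size condition, after line~2 the independent set $S - A_k + B_{k+1}$ has cardinality $|S| - |A_k| + |B_{k+1}|$, and line~3 maximally augments it by elements of $A_k$, reaching rank $\rk_1(S + B_{k+1})$. Since $B_{k+1} \subseteq D_{2k+1}$ and $k \ge 1$, no $b \in B_{k+1}$ admits the edge $(s,b)$, i.e.\ $S + b \notin \cI_1$; submodularity of $\rk_1$ lifts this elementwise statement to $\rk_1(S + B_{k+1}) = \rk_1(S) = |S|$. Hence the maximum $|A|$ equals $|A_k| - |B_{k+1}|$, giving $|A_k|_{\text{new}} = |B_{k+1}|_{\text{new}}$. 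With this equality in hand, the monotonicity (b) is immediate: $|A_k|$ only shrinks, $|B_{k+1}|$ only grows, and the neighboring comparisons $|B_k| \ge |A_k|_{\text{new}}$ and $|B_{k+1}|_{\text{new}} \ge |A_{k+1}|$ follow from the pre-procedure ordering combined with the new equality.

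For invariant (f), shrinking $A_k$ by $A$ enlarges $S - A_k + B_k$, so I must check its $\rk_2$ stays at $|S|$. Sandwiching
\[
|S| \;=\; \rk_2(S - A_k + B_k) \;\le\; \rk_2(S - (A_k - A) + B_k) \;\le\; \rk_2(S + B_k),
\]
the task reduces to showing $\rk_2(S + B_k) = |S|$. Since $B_k \subseteq D_{2k-1}$ and $k \le \ell$, no $b \in B_k$ satisfies $S + b \in \cI_2$ (else the edge $(b,t)$ would yield an $(s,t)$-path of length at most $2k < 2\ell + 2$, contradicting the BFS depth), and submodularity of $\rk_2$ again gives $\rk_2(S + B_k) = |S|$. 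The main obstacle I anticipate is not anything deep, but the careful bookkeeping for edge cases---especially $k = 0$ and $k = \ell$, where the ``imaginary'' layers interact with the $B_1$, $B_{\ell+1}$ conditions of the partial augmenting set---and translating the ``useless-$R$-element'' phrasing of phase invariants (c) and (d) into the matroid-independence statements they encode.
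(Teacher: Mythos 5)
The paper does not actually prove this lemma; it is cited verbatim from CLSSW Lemmas~40--42, so there is no in-paper proof to compare against. Judging your attempt on its own merits: the two rank arguments you give explicitly are correct and are the natural ones. For the size equality, the maximality of $A$ in line~3 indeed gives $|S|-|A_k|+|B_{k+1}|+|A| = \rk_1(S+B_{k+1})$, and since each $b\in B_{k+1}\subseteq D_{2k+1}$ (with $k\ge 1$) has $S+b\notin\cI_1$, submodularity yields $\rk_1(S+B_{k+1})=|S|$, hence $|A_k^{\mathrm{new}}|=|B_{k+1}^{\mathrm{new}}|$. The sandwich for invariant~(f) at index~$k$ is likewise fine.

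There are, however, genuine gaps. First, you only treat invariant~(f) at index~$k$, but $\texttt{RefineAB}(k)$ also enlarges $B_{k+1}$, so you must separately verify $\rk_2(S-A_{k+1}+B_{k+1}^{\mathrm{new}})=\rk_2(S)$; the same sandwich works for $k\le\ell-1$, but you need to say so. Second, and more seriously, you dismiss the $k=\ell$ case of condition~(d), $S+B_{\ell+1}\in\cI_2$, as ``a short extra check from how $F_{2\ell+1}$ sits inside $D_{2\ell+1}$.'' It is not: line~1 of $\texttt{RefineAB}(\ell)$ adds elements of $F_{2\ell+1}$ to $B_{\ell+1}$ subject only to a test in $\cM_1$, and nothing prevents two newly added elements from being parallel in $\cM_2$, breaking $S+B_{\ell+1}\in\cI_2$. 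Preserving this invariant really does require either a precondition that $\texttt{RefineBA}(\ell+1)$ has made $B_{\ell+1}$ already ``saturated'' with respect to $\cM_2$ and $F_{2\ell+1}$ (which is exactly what the ordering inside $\texttt{RefineABA}$ and in CLSSW's $\texttt{Refine}$ arranges), or an argument showing line~1 adds nothing at $k=\ell$. You should not wave this away. Third, you compress the entire verification of phase invariants~(c) and~(d) into ``maximality of the selections in lines 1 and 3 together with a standard matroid-contraction argument''; this is precisely the nontrivial bulk of the cited Lemmas~40--41, and it is not a standard single-line fact. For instance, for phase invariant~(c) at index~$k$ you must show: for any $X\subseteq B_{k+1}^{\mathrm{new}}+F_{2k+1}^{\mathrm{new}}$ with $S-A_k^{\mathrm{old}}+X\in\cI_1$, also $S-A_k^{\mathrm{new}}+X\in\cI_1$; this requires a submodularity argument built specifically on the fact that $S-A_k^{\mathrm{old}}+B_{k+1}^{\mathrm{new}}$ is a maximal independent subset of $S-A_k^{\mathrm{old}}+(D_{2k+1}-R_{2k+1})$ after line~2, not a generic contraction fact. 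Finally, calling $\texttt{RefineBA}$ ``structurally identical modulo swapping $\cM_1\leftrightarrow\cM_2$'' glosses over the asymmetry that $\texttt{RefineAB}$ grows the odd layer then shrinks the even one, while $\texttt{RefineBA}$ shrinks the odd layer then grows the even one; the rank computation for $|B_k|=|A_k|$ after $\texttt{RefineBA}(k)$ has a different shape and should be spelled out.
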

\begin{lemma}[{from \cite[Lemma~45]{chakrabarty2019faster}}]
\label{lemma:refine-queries}
$\texttt{RefineAB}$ can be implemented with $O(|D_{2k}|+|D_{2k+1}|)$ queries.
$\texttt{RefineBA}$ can be implemented with $O(|D_{2k-1}|+|D_{2k}|)$ queries.
\end{lemma}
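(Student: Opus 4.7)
The plan is to show that all four ``find maximal $X$'' lines in the two refinement procedures admit a greedy implementation that uses one independence query per element examined. Since \texttt{RefineAB} scans candidate pools contained in $D_{2k+1}$ and $D_{2k}$, and \texttt{RefineBA} scans pools contained in $D_{2k-1}$ and $D_{2k}$, the stated bounds will follow immediately.

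First I would analyze each of the four ``find maximal'' steps in isolation. For example, Step 1 of \texttt{RefineAB} asks for a maximal $B \subseteq F_{2k+1}$ with $S - A_{k} + B_{k+1} + B \in \mathcal{I}_{1}$. The feasible family $\{B \subseteq F_{2k+1} : S-A_{k}+B_{k+1}+B\in\mathcal{I}_{1}\}$ is precisely the independent-set family of $\mathcal{M}_{1}$ contracted by the set $X := S-A_{k}+B_{k+1}$ and restricted to $F_{2k+1}$. Since contraction and restriction of a matroid yield a matroid, greedy inclusion returns a maximal (indeed maximum) feasible set. The same reasoning handles Step 3 of \texttt{RefineAB} (with $X = S-A_{k}+B_{k+1}$ and pool $A_{k}$) and Steps 1 and 3 of \texttt{RefineBA} (with $X = S-(D_{2k}-R_{2k})$ and $X = S-(D_{2k}-R_{2k})+B_{k}$ in $\mathcal{M}_{2}$, and pools $B_{k}$ and $F_{2k}$ respectively).

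Second, I would implement the greedy loop explicitly: initialize the output to the empty set, iterate through the candidate pool in an arbitrary order, and for each new element $x$ perform a single oracle call to test whether adding $x$ to the current set still satisfies the corresponding independence condition; include $x$ iff the test passes. This costs $O(1)$ queries per element of the pool, so summing gives $O(|F_{2k+1}| + |A_{k}|) = O(|D_{2k+1}|+|D_{2k}|)$ for \texttt{RefineAB} and $O(|B_{k}|+|F_{2k}|) = O(|D_{2k-1}|+|D_{2k}|)$ for \texttt{RefineBA}, as required.

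The only subtlety I anticipate is justifying that greedy really produces a \emph{maximal} feasible subset in each call. For a pure matroid-independence constraint this is standard, but I should first record that the base set $X$ used in each contraction is itself independent in the appropriate matroid: invariant (a--b) gives $S-A_{k}+B_{k+1}\in\mathcal{I}_{1}$ as needed for \texttt{RefineAB}, and invariant (d) together with the structure of the exchange graph gives $S-(D_{2k}-R_{2k})+B_{k}\in\mathcal{I}_{2}$ as needed for \texttt{RefineBA}. With these in hand, the matroid exchange property yields greedy-maximal equals globally-maximal, which is all that is required. No nontrivial data structures or binary searches are invoked here (unlike \Cref{lem:find-exchange}), so the query bound has no extra logarithmic factor.
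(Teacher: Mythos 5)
Your argument is correct, and it is the standard one: each of the four ``find maximal'' steps is a greedy scan over a pool contained in one distance layer, spending one independence query per element, which gives the claimed bounds by summing pool sizes. Note that the paper itself does not re-prove this lemma but cites \cite[Lemma~45]{chakrabarty2019faster}; your reconstruction is what that reference establishes. One small simplification worth noting: you do not actually need the contraction/restriction-is-a-matroid observation nor the exchange property to justify maximality. Each feasible family such as $\{B\subseteq F_{2k+1} : S-A_k+B_{k+1}+B\in\cI_1\}$ is downward closed (because $\cI_1$ is), and a one-pass greedy over any downward-closed family already returns a \emph{maximal} member --- if some rejected $x$ could later be added to the final set $B$, then by downward closure it could also have been added to the smaller set held at the time $x$ was examined, a contradiction. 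The matroid structure is only needed if you want ``maximum,'' which the algorithm does not require. Your observation that no binary search (as in \cref{lem:find-exchange}) is involved, hence no extra $\log$ factor, is also correct.
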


\begin{observation} \label{obs:refine-old}
\cref{lemma:refine-invariants} is particularly interesting. It says that
at least $|A^{old}_{k}|-|B^{old}_{k+1}|$ (respectively
$|B^{old}_{k}|-|A^{old}_{k}|$)
elements change type when running $\texttt{RefineAB}$ (respectively $\texttt{RefineAB}$).
\end{observation}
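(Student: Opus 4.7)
The plan is to observe that the observation is a direct counting consequence of \cref{lemma:refine-invariants}, so the proof is pure bookkeeping: track exactly which elements change type and then apply the size equalities guaranteed by that lemma.

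First I would analyze $\texttt{RefineAB}(k)$. Inspecting the pseudocode, elements change type in only two places: in Lines~1--2, the elements of $B \subseteq F_{2k+1}$ move from \emph{fresh} to \emph{selected} (being added to $B_{k+1}$), and in Lines~3--4, the elements of $A \subseteq A_k$ move from \emph{selected} to \emph{removed} (being added to $R_{2k}$). So the total number of type changes is exactly $|B| + |A|$. From the updates we have $|B_{k+1}^{new}| = |B_{k+1}^{old}| + |B|$ and $|A_k^{new}| = |A_k^{old}| - |A|$, and by \cref{lemma:refine-invariants} (assuming $k\neq 0$) we know $|A_k^{new}| = |B_{k+1}^{new}|$. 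Combining these three equalities yields $|A| + |B| = |A_k^{old}| - |B_{k+1}^{old}|$, which gives the claimed bound (and it is in fact an equality, so ``at least'' is certainly fine).

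The analysis of $\texttt{RefineBA}(k)$ proceeds in exactly the same style. Here the two places where type changes occur are Lines~1--2, where the $|B_k^{old}| - |B|$ elements of $B_k\setminus B$ move from \emph{selected} to \emph{removed}, and Lines~3--4, where the $|F_{2k}^{old}| - |A|$ elements of $F_{2k}\setminus A$ move from \emph{fresh} to \emph{selected}. From the updates, $B_k^{new} = B$ and $A_k^{new} = A_k^{old} + (F_{2k}^{old}\setminus A)$, so $|A_k^{new}| = |A_k^{old}| + |F_{2k}^{old}| - |A|$. By \cref{lemma:refine-invariants} (assuming $k \neq \ell+1$), $|B_k^{new}| = |A_k^{new}|$, i.e.\ $|B| = |A_k^{old}| + |F_{2k}^{old}| - |A|$. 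Adding the two contributions gives a total of $(|B_k^{old}| - |B|) + (|F_{2k}^{old}| - |A|) = |B_k^{old}| - |A_k^{old}|$, as required.

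There is essentially no obstacle: the only thing to be slightly careful about is the edge cases $k=0$ for $\texttt{RefineAB}$ and $k=\ell+1$ for $\texttt{RefineBA}$, which are excluded by \cref{lemma:refine-invariants}; however, in those boundary cases the relevant old quantities ($A_0$, $B_{\ell+2}$) are empty by the convention $A_0 = A_{\ell+1} = \emptyset$ introduced earlier, so the claimed lower bound $|A_k^{old}| - |B_{k+1}^{old}|$ (respectively $|B_k^{old}| - |A_k^{old}|$) is either trivially nonpositive or handled by the same argument with the boundary conventions. Thus the observation follows immediately from (in fact with equality) the size guarantees of \cref{lemma:refine-invariants}.
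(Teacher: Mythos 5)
Your proposal is correct and is essentially the only sensible way to prove this: the paper states \cref{obs:refine-old} without a proof (treating it as an immediate corollary of \cref{lemma:refine-invariants}), and your bookkeeping --- tracking which lines change types and then invoking the size equalities $|A_k|=|B_{k+1}|$ and $|B_k|=|A_k|$ from that lemma --- is exactly the filling-in the reader is meant to do. Your calculation also correctly shows the bound holds with \emph{equality}, which is a nice extra precision.

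One small quibble about your edge-case paragraph. For $\texttt{RefineAB}(0)$ the conventions do give $A_0=\emptyset$, so the would-be bound $|A_0^{old}|-|B_1^{old}|\le 0$ is indeed vacuous, as you say. But for $\texttt{RefineBA}(\ell+1)$ the would-be bound is $|B_{\ell+1}^{old}|-|A_{\ell+1}^{old}| = |B_{\ell+1}^{old}|$, which is \emph{non}negative and typically positive; yet $\texttt{RefineBA}(\ell+1)$ changes no types at all (by invariant (a-b), $S+B_{\ell+1}\in\cI_2$ already, so Line~1 returns $B=B_{\ell+1}$ and Line~3 returns $A\subseteq F_{2\ell+2}=\emptyset$). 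So the claimed lower bound is simply \emph{false} in that boundary case, not ``handled by the same argument.'' The correct reading --- implicit in the paper --- is that \cref{obs:refine-old} only asserts something in the regime where \cref{lemma:refine-invariants} gives the size equality, i.e.\ $k\ne 0$ for $\texttt{RefineAB}$ and $k\ne \ell+1$ for $\texttt{RefineBA}$. You should state that restriction rather than suggest the boundary cases go through. (This does not affect the downstream use: in the telescoping argument over a full \texttt{Refine} pass, the boundary terms drop out cleanly.)
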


\begin{remark}
\Cref{obs:refine-old} is used in \cite{chakrabarty2019faster} to bound the number of times one needs to refine the partial augmenting set. Indeed, every element can only change its type a constant number of times.
In a single refinement pass, procedures \texttt{RefineAB}(k) and \texttt{RefineBA}(k)
are called for all $k$, and we obtain a telescoping sum guaranteeing us that $|B^{old}_1| - |B^{old}_{\ell+1}|$ elements have changed their types. Hence, after $O(\sqrt{n})$ refinement passes we have $|B_1| -|B_{\ell+1}| \le \sqrt{n}$, and we are    ``close'' to having a maximal augmenting set---only around $\sqrt{n}$ many augmenting paths away.
This is essentially what lets \cite{chakrabarty2019faster} obtain
their subquadratic $\tO(n^{1.5}/\text{poly}(\eps))$ algorithm.
\end{remark}

\subsubsection{Refining Three Adjacent Layers}
\label{sec:refineaba}

We are now ready to present the new $\texttt{RefineABA}$ method (\cref{alg:refineABA}),
which is \textbf{not} present in \cite{chakrabarty2019faster}.
This method works similarly to \texttt{RefineAB} and \texttt{RefineBA}, but on \textbf{three} (instead of two) consecutive layers $(D_{2k}, D_{2k+1}, D_{2k+2})$
with the corresponding sets $(A_{k}, B_{k+1}, A_{k+1})$.

The motivation for this new procedure is that we can get a stronger version of \cref{obs:refine-old}:
after running $\texttt{RefineABA}(k)$ we want that at least $|A^{old}_{k}|-|A^{old}_{k+1}|$ element in \textbf{even} layers have changed types.
Note that there are at most $|S| \le r$ elements in the even layers (as opposed to $n$
elements in total, which can be much larger), so this means we need to refine
the partial augmenting set fewer times when using $\texttt{RefineABA}$
compared to when just using $\texttt{RefineAB}$ and $\texttt{RefineBA}$.
In particular, we will get that after $O(\sqrt{r})$ refinement passes, $|B_1|-|B_{\ell+1}|\le \sqrt{r}$.

\begin{remark}
A natural question to ask is if it actually could be the case that
only elements in odd layers (i.e.\ those in $\bar{S}$ which there are up to $n$ many of)
change their type (while elements in even layers do not) during the refinement passes in the algorithm of \cite{chakrabarty2019faster} (which only uses the two-layer refinement procedures)?
That is, is the new three-layer refinement procedure necessary?
The answer is yes.
Consider for example the case with 5 layers
$B_1\subseteq D_1; A_1\subseteq D_2; B_2\subseteq D_3; A_2 \subseteq D_4; B_3\subseteq D_5$ where $q := |B_1| = |A_1|$ and $|A_2| = |B_3| = 0$.
Refining the consecutive pair $(B_1, A_1)$ or $(A_2, B_3)$ will not do anything.
When refining $(A_1, B_2)$ it could be the case that only $B_2$ increases (say
any $q$-size subset in $D_3$ can be ``matched'' with $A_1$).
Similarly, when refining $(B_2,A_2)$ it could be the case that only $B_2$ decreases (say there is only a single element in $D_3$ which could be ``matched'' with anything in the next layer $D_4$, then it is unlikely that this specific element is already selected in $B_2$). In this case, we would need to run the two-layer refinement procedures around
$|D_3|/q \approx n/q$ times before anything other than $B_2$ changes.
In contrast, the new $\texttt{RefineABA}$ method would, when run on $(A_1, B_2, A_2)$,
    terminate with $|A_1| = |B_2| = |A_2|$ (that is it would have found the ``special'' element in $D_3$ the first time it is run).
\end{remark}

\begin{figure}[ht]
\begin{center}
\includegraphics[width=10cm]{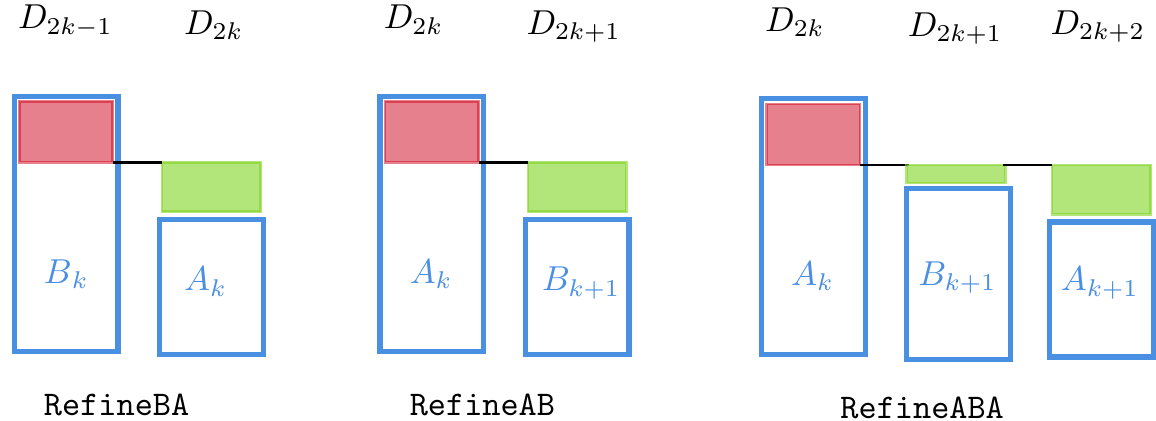}
\end{center}
\caption{
An illustration how the different refine methods change the partial augmenting sets.
Newly selected elements are marked in green, while newly removed elements are marked in red.
}
\end{figure}

To explain how $\texttt{RefineABA}$ works, let us start with a simple case, namely when $S = \emptyset$, i.e.\ there is only one layer between $s$ and $t$ in the exchange graph. Here, finding a maximal augmenting set is the same as finding some maximal set $B$ which is independent in both matroids. Running $\texttt{RefineAB}$ would extend this $B$
with elements as long as it is independent in the first matroid (ignoring the second matroid), while $\texttt{RefineBA}$ would throw away elements from $B$ until it is independent in the second matroid (now ignoring the first matroid). If we just alternate running $\texttt{RefineAB}$ and
$\texttt{RefineBA}$ we would in the worst case need to do this up to $n$ times (which is too expensive).
Instead, there is a very simple greedy algorithm that efficiently
finds a maximal set $B$ independent in both of the matroids\footnote{This algorithm on its own is a well-known $\frac{1}{2}$-approximation algorithm for matroid intersection.}: \emph{for each element, include it in $B$ if this does not break independence for either matroid}. This is akin to how our
$\texttt{RefineABA}$ method works: it looks at the constraints from both matroids simultaneously (both neighboring layers) and greedily selects $B$.

In the general case, $\texttt{RefineABA}$ can be seen as running $\texttt{RefineAB}$ and $\texttt{RefineBA}$
simultaneously.
The algorithm starts by asserting $|B_{k+1}|=|A_{k+1}|$ (so that $S+B_{k+1}-A_{k+1}\in \cI_2$) by running \texttt{RefineBA}.
So now we have both $S+B_{k+1}-A_{k}\in \cI_1$ and $S+B_{k+1}-A_{k+1}\in \cI_2$, and the algorithm proceeds to greedily 
extend $B_{k+1}$
while it is still consistent with both the previous layer $A_k$ and the next layer $A_{k+1} + F_{2k+2}$.
Some care has to be taken here to also mark elements as removed to preserve the phase invariants.
Finally,
the algorithm decreases the size of $A_k$, respectively increases the size of $A_{k+1}$, to both match $|B_{k+1}|$.

\begin{algorithm}[ht]
\caption{$\texttt{RefineABA}(k)$}
\label{alg:refineABA}
\begin{algorithmic}[1]

\State $\texttt{RefineBA}(k+1)$
\For{$x \in F_{2k+1}$}
\If{$S-A_k + B_{k+1} + x \in \cI_1$}
\If{$S-A_{k+1} - F_{2k+2} + B_{k+1} + x \in \cI_2$}
\State $B_{k+1} \gets B_{k+1} + x, \quad F_{2k+1} \gets F_{2k+1} - x$
\Comment{Select $x$}
\Else
\State $R_{2k+1} \gets R_{2k+1} + x, \quad F_{2k+1} \gets F_{2k+1} - x$
\label{lst:line:remove-x}
\Comment{Remove $x$}
\EndIf
\EndIf
\EndFor
\State $\texttt{RefineBA}(k+1)$
\State $\texttt{RefineAB}(k)$
\end{algorithmic}
\end{algorithm}

We now state some properties of \texttt{RefineABA}.
These properties are relatively straightforward---although technical and notation-heavy---to prove.

\begin{lemma} \label{lem:aba-invariants}
\texttt{RefineABA}$(k)$ preserves the phase invariants.
\end{lemma}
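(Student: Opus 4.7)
The plan is to exploit the structure of $\texttt{RefineABA}(k)$: it is a sandwich of two $\texttt{RefineBA}(k+1)$ calls and a concluding $\texttt{RefineAB}(k)$ around the central for-loop (lines 2--9). Each of these three wrapper calls preserves the phase invariants by \cref{lemma:refine-invariants}, so the real work is to verify that the for-loop also preserves them; the lemma then follows by composition. In the loop only $B_{k+1}$, $F_{2k+1}$, and $R_{2k+1}$ are modified, so the only invariants that can be affected are those mentioning these sets: namely condition (e) of the partial augmenting set at index $k$, i.e.\ $S - A_k + B_{k+1} \in \cI_1$; condition (f) at index $k+1$, i.e.\ $\rk_2(S - A_{k+1} + B_{k+1}) = \rk_2(S)$; invariant (c) at $j = k$; and invariant (d) at $j = k+1$. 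I would verify each by a case split on the two if-branches of the loop: (B) $x$ is selected into $B_{k+1}$, or (C) $x$ is removed into $R_{2k+1}$; if neither guard fires, nothing changes.

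Condition (e) at index $k$ is handled directly by the outer guard $S - A_k + B_{k+1} + x \in \cI_1$, and invariant (c) at $j = k$ is immediate because $B_{k+1} + F_{2k+1}$ is unchanged in Case B and strictly shrinks in Case C, so its universal quantifier either stays the same or becomes weaker. For condition (f) at index $k+1$, I would use the exchange-graph structure: since $k+1\le\ell$, no $b \in D_{2k+1}$ can have an edge to $t$ (otherwise the shortest $s$-$t$ distance would be less than $2\ell+2$), so $S + b \notin \cI_2$ for every such $b$, i.e.\ $D_{2k+1} \subseteq \text{span}_{\cM_2}(S)$. Thus $B_{k+1}\cup\{x\} \subseteq \text{span}_{\cM_2}(S)$, and hence $S - A_{k+1} + B_{k+1} + x \subseteq \text{span}_{\cM_2}(S)$ has rank at most $|S|$; the matching lower bound follows from monotonicity applied to the old invariant (f).

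The main technical step is invariant (d) at $j = k+1$: writing $W := S - A_{k+1} - F_{2k+2} + B_{k+1}$, I need to check that the updated values still satisfy $\rk_2(W' + R'_{2k+1}) = \rk_2(W')$. In Case B, $W$ becomes $W + x$ while $R_{2k+1}$ is unchanged, so the standard submodularity inequality
\[ \rk_2(W + x) + \rk_2(W + R_{2k+1}) \ge \rk_2(W + x + R_{2k+1}) + \rk_2(W), \]
combined with the old equality $\rk_2(W + R_{2k+1}) = \rk_2(W)$ and monotonicity, yields $\rk_2(W + x + R_{2k+1}) = \rk_2(W + x)$. In Case C, $R_{2k+1}$ gains $x$ but $W$ is unchanged; the failed second guard gives $W + x \notin \cI_2$, and since $W\in\cI_2$ (a subset of $S - A_{k+1} + B_{k+1}$, which is independent after the first $\texttt{RefineBA}(k+1)$) this forces $\rk_2(W + x) = \rk_2(W)$, after which the same submodularity step gives $\rk_2(W + R_{2k+1} + x) = \rk_2(W)$.

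Finally, the size monotonicity part of invariant (b) may transiently break during the for-loop since $|B_{k+1}|$ can exceed $|A_k|$ or $|A_{k+1}|$, but it is re-established by the concluding calls $\texttt{RefineBA}(k+1)$ (which enforces $|A_{k+1}| = |B_{k+1}|$) and $\texttt{RefineAB}(k)$ (which enforces $|A_k| = |B_{k+1}|$). I expect the chief obstacle to be the careful rank bookkeeping for (d), especially confirming that $W \in \cI_2$ holds at every iteration of the loop so that the Case C argument goes through; this should follow inductively from the fact that the initial $\texttt{RefineBA}(k+1)$ ensures $S - A_{k+1} + B_{k+1} \in \cI_2$, and each Case B update preserves $W + x \in \cI_2$ by the second guard.
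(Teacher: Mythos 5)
Your proof is correct and follows essentially the same decomposition as the paper's: dispatch the three wrapper calls via \cref{lemma:refine-invariants}, then verify the for-loop invariant by invariant, using the inductive fact that $W\in\cI_1$ persists because $B_{k+1}$ is only extended when the second guard holds. Where the paper uses a ``maximal-independent-subset'' argument for invariant~(d), you substitute a submodularity inequality---these are equivalent, and you additionally spell out the Case~B bookkeeping (when $x$ is selected) that the paper leaves implicit.

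One small correction worth noting: your worry that the monotonicity chain in condition~(b) transiently breaks during the for-loop is unfounded. For $k\ge 1$ the first guard $S-A_k+B_{k+1}+x\in\cI_1$ cannot pass when $|A_k|=|B_{k+1}|$: then $\rk_1(S-A_k+B_{k+1}+x)=|S|+1>\rk_1(S)$, and the exchange property would produce an element of $(B_{k+1}+x)\subseteq D_{2k+1}$ with $S+y\in\cI_1$, i.e.\ an element of $D_1$, a contradiction for $k\ge 1$ (and for $k=0$ there is no constraint since $A_0=\emptyset$). So $|A_k|\ge|B_{k+1}|$ holds throughout the loop, which is in fact what licenses applying \cref{lemma:refine-invariants} to the concluding $\texttt{RefineBA}(k+1)$ and $\texttt{RefineAB}(k)$ calls---if~(b) had genuinely broken, those applications would not be justified.
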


\begin{lemma}
\label{lem:aba-sizes}
After \texttt{RefineABA}$(k)$ is run, we have $|A_k| = |B_{k+1}| = |A_{k+1}|$
(unless $k = 0$ or $k=\ell$, where the sets $A_0 = A_{\ell+1} = \emptyset$ are ``imaginiary'').
\end{lemma}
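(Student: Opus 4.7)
The strategy is to reduce the lemma to showing that the two refine calls at the end of $\texttt{RefineABA}(k)$---namely the second $\texttt{RefineBA}(k+1)$ and the final $\texttt{RefineAB}(k)$---do not actually modify $B_{k+1}$. Once this is established, the two size equalities follow directly from \cref{lemma:refine-invariants}: the $\texttt{RefineBA}(k+1)$ call will then only extend $A_{k+1}$ to size $|B_{k+1}|$, and the $\texttt{RefineAB}(k)$ call will only trim $A_k$ to the same size.

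To that end, I would first establish two structural properties of the greedy \emph{for}-loop in the middle of $\texttt{RefineABA}(k)$. The first is that \emph{after the loop, $B_{k+1}$ is maximal inside the updated $F_{2k+1}$ with respect to $\cM_1$}: for every remaining $x \in F_{2k+1}$ one has $S - A_k + B_{k+1} + x \notin \cI_1$. Any such $x$ was skipped at the moment it was considered precisely because this independence check failed at that time, and since $B_{k+1}$ only grows during the loop, downward closure of $\cI_1$ preserves the failure. The second is that \emph{throughout the loop the relation $S - (D_{2k+2} - R_{2k+2}) + B_{k+1} \in \cI_2$ is maintained}: it holds immediately after the first $\texttt{RefineBA}(k+1)$ by inspection of that subroutine, and the loop only extends $B_{k+1}$ by $x$ after explicitly verifying $S - A_{k+1} - F_{2k+2} + B_{k+1} + x \in \cI_2$, which is the same condition since $A_{k+1} + F_{2k+2} = D_{2k+2} - R_{2k+2}$.

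From here the argument is mechanical. The second $\texttt{RefineBA}(k+1)$ first looks for a maximal $B \subseteq B_{k+1}$ with $S - (D_{2k+2} - R_{2k+2}) + B \in \cI_2$; by the second property, $B = B_{k+1}$ itself works, so $B_{k+1}$ is not shrunk, and the call just extends $A_{k+1}$ out of $F_{2k+2}$ until $|B_{k+1}| = |A_{k+1}|$ (unless $k = \ell$, in which case $A_{\ell+1}$ is imaginary and the call is vacuous). Symmetrically, the final $\texttt{RefineAB}(k)$ first looks for a maximal $B \subseteq F_{2k+1}$ with $S - A_k + B_{k+1} + B \in \cI_1$; by the first property together with downward closure no single element can be added, hence no nonempty $B$ works, so $B_{k+1}$ does not grow, and the call simply trims $A_k$ until $|A_k| = |B_{k+1}|$ (unless $k = 0$). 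Chaining the two equalities yields $|A_k| = |B_{k+1}| = |A_{k+1}|$ with the stated exceptions.

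I expect the main obstacle to be pinning down the second structural property, which is also the reason the algorithm runs $\texttt{RefineBA}(k+1)$ both before and after the greedy loop: a single pre-loop call sets up the $\cI_2$ relation, but the loop can then enlarge $B_{k+1}$ and upset the size balance with $A_{k+1}$, so the post-loop call is needed to restore it---and verifying that this post-loop call really only extends $A_{k+1}$ rather than shrinking $B_{k+1}$ is exactly what the second property buys us.
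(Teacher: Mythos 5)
Your proposal is correct and takes essentially the same approach as the paper: reduce the lemma to showing that the post-loop $\texttt{RefineBA}(k+1)$ and $\texttt{RefineAB}(k)$ calls do not alter $B_{k+1}$, then invoke \cref{lemma:refine-invariants} to conclude the size equalities. Your two "structural properties" are precisely the paper's two justifications for this invariance of $B_{k+1}$ (your first property even makes explicit the downward-closure step that the paper leaves implicit in its "was either selected or removed" remark).
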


\begin{lemma}
\label{lem:aba-queries}
\texttt{RefineABA}$(k)$ uses $O(|D_{2k}|+|D_{2k+1}|+|D_{2k+2}|)$ independence queries.
\end{lemma}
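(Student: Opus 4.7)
The plan is to bound the query complexity of each of the three components of Algorithm \ref{alg:refineABA} separately and then sum them. First, I would apply \Cref{lemma:refine-queries} directly to the black-box subroutine calls: each of the two invocations of $\texttt{RefineBA}(k+1)$ (Lines 1 and 10) costs $O(|D_{2k+1}|+|D_{2k+2}|)$ queries, and the single call to $\texttt{RefineAB}(k)$ (Line 11) costs $O(|D_{2k}|+|D_{2k+1}|)$ queries. These three contributions already match the target bound on their own, so all that is left is to show that the \textbf{for}-loop on Lines 2--9 does not blow up the count.

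For the loop, I would observe that it iterates over the initial $F_{2k+1}$, which has at most $|D_{2k+1}|$ elements (since $F_{2k+1} \subseteq D_{2k+1}$ at all times, and only shrinks as the loop runs). Each iteration performs at most two independence queries: the test on Line 3 asking whether $S - A_k + B_{k+1} + x \in \cI_1$, and, conditionally, the test on Line 4 asking whether $S - A_{k+1} - F_{2k+2} + B_{k+1} + x \in \cI_2$. The remaining operations in the body (Lines 5 and 7) are pure set-bookkeeping updates and consume no queries. Thus the loop contributes $O(|F_{2k+1}|) = O(|D_{2k+1}|)$ additional queries.

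Summing the three subroutine bounds with the loop bound yields the claimed $O(|D_{2k}| + |D_{2k+1}| + |D_{2k+2}|)$ total. I don't anticipate a genuine obstacle here; the argument is essentially bookkeeping. The only subtle point worth flagging explicitly is that the conditions on Lines 3 and 4 really are single independence queries on concretely specified sets, rather than operations of the kind that would require the binary-search procedure of \Cref{lem:find-exchange}; since both tests are stated as direct membership checks in $\cI_1$ and $\cI_2$, this is immediate from the definition of the independence oracle.
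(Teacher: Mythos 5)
Your proposal is correct and follows essentially the same decomposition as the paper's proof: bound the two \texttt{RefineBA}$(k+1)$ calls and the \texttt{RefineAB}$(k)$ call by \Cref{lemma:refine-queries}, and observe that the \textbf{for}-loop contributes at most two direct independence queries per element of $F_{2k+1}\subseteq D_{2k+1}$.
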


\begin{proof}[Proof of \cref{lem:aba-invariants}]
Intuitively, the only tricky part is showing that invariant (c) is preserved when some $x$ is removed in line~\ref{lst:line:remove-x}.
We can pretend that we add $x$ to $B_{k+1}$ temporarily, and then run $\texttt{RefineBA}(k+1)$ in a way which would remove this $x$ immediately (and thus removing $x$ did indeed preserve the invariants). We present a formal proof below.

We already know that $\texttt{RefineAB}$ and
$\texttt{RefineBA}$ preserve the invariants by \cref{lemma:refine-invariants},
so it suffices to check that the for-loop starting in line~2 preserves the invariants.
We verify that this is the case after processing each $x\in F_{2k+1}$ in the for-loop:
\begin{description}
    \item[Invariant (a-b)] holds by design: when $x$ is added to $B_{k+1}$ we know both
that $S-A_{k}+B_{k+1}+x\in \cI_1$ and $\rk_2(S-A_{k+1}+B_{k+1})$ cannot decrease.
Note also that $\rk_2(S-A_{k+1}+B_{k+1})\le \rk_2(S)$ when $k+1\le \ell$ too (so it cannot increase either), since
otherwise there must exist
some $b\in B_{k+1}$ so that $S+b \in \cI_2$ (by the matroid exchange property) which is impossible since we are not in the last layer (the layer preceding $t$ in $G(S)$).

    \item[Invariant (c)] trivially holds, since the set $B_{k+1}+F_{2k+1}$ will only decrease, which only restricts the choice of $X\subseteq B_{k+1}+F_{2k+1}$.

    \item[Invariant (d)] will also be preserved.
    We need to argue that this is the case when $x$ is removed in line~7.
    Let $W := S-A_{k+1}-F_{2k+2}+B_{k+1} = S-(D_{2k+2} - R_{2k+2}) + B_{k+1}$,
    and $R^{old}_{2k+1}$ be the set $R_{2k+1}$ before $x$ was added to it.
    First note that $W\in \cI_2$, since this holds after the \texttt{RefineBA} call in line~1, (since $|A_{k+1}| = |B_{k+1}|$ after this call)
    and $B_{k+1}$ is only extended with elements which preserve this property.
    This means that $\rk_2(W+x) = \rk_2(W) = |W|$, since
    $W +x = S - A_{k+1} - F_{2k+2} + B_{k+1}+x \notin \cI_2$.
    Since the invariant held before, we also know that
    $\rk_2(W+R^{old}_{2k+1}) = \rk_2(W) = |W|$.
    Hence $W$ is a maximal independent (in $\cM_2$) subset
    of $W+R^{old}_{2k+1}+x$, as neither $x$ nor elements from $R^{old}_{2k+1}$
    can be used to extend it.
    Hence $\rk_2(W+R^{old}_{2k+1}+x) = |W| = \rk_{2}(W)$; that is invariant (d) is preserved. \qedhere{}
\end{description}
\end{proof}

\begin{proof}[Proof of \cref{lem:aba-sizes}]
We focus our attention on the $\texttt{RefineBA}$ and
$\texttt{RefineAB}$ calls in lines~8-9, and argue that they do not change $B_{k+1}$.
This would prove the lemma, since by \cref{lemma:refine-invariants}
we would then have $|A_k| = |B_{k+1}|$ and $|B_{k+1}| = |A_{k+1}|$.

Indeed, 
$\texttt{RefineBA}(k+1)$ finds a maximal $B\subseteq B_{k+1}$ such that
$S-(D_{2k+2} - R_{2k+2}) + B \subseteq \cI_2$, and remove all elements not in $B$ from $B_{k+1}$.
Here, $B = B_{k+1}$ will be found, since
$S-(D_{2k+2} - R_{2k+2}) + B_{k+1} \in \cI_2$
after the for-loop in line~2 of \texttt{RefineABA}.

Similarly, we see that
$\texttt{RefineAB}(k)$ finds a maximal $B\subseteq F_{2k+1}$ such that
$S-A_{k} + B_{k+1} + B \in \cI_1$, and extend $B_{k+1}$ with this $B$.
However, only $B=\emptyset$ works, since each $x\in F_{2k+1}$
for which $S-A_k + B_{k+1} + x\in \cI_1$ was either selected or removed in lines~5 or 7.
\end{proof}

\begin{proof}[Proof of \cref{lem:aba-queries}]
\texttt{RefineAB}$(k)$ uses $O(|D_{2k}|+|D_{2k+1}|)$
queries,
and \texttt{RefineBA}$(k+1)$ uses $O(|D_{2k+1}|+|D_{2k+2}|)$ queries.
The for-loop in line~2 will use $O(|D_{2k+1}|)$ queries.
\end{proof}

\subsubsection{Refinement Pass}

We can now present the full \texttt{Refine} method (\cref{alg:refine}), which simply scans over the layers
and calls \texttt{RefineABA} on them.
Our \texttt{Refine}
is a modified version of \texttt{Refine} from \cite[Algorithm~11]{chakrabarty2019faster} using our new
\texttt{RefineABA} method instead of just \texttt{RefineAB} and \texttt{RefineBA}.
Just replacing the \texttt{Refine} method in the final algorithm of \cite{chakrabarty2019faster} with our modified \texttt{Refine} below leads to an $\tO(n\sqrt{r} / \eps^{1.5})$-query algorithm
(compared to their $\tO(n^{1.5} / \eps^{1.5})$), and concludes our first improvement (as discussed in \cref{itm:approx-improvement-r} in \cref{sec:overview}).

\begin{algorithm}[ht]
\caption{$\texttt{Refine}(k)$}
\label{alg:refine}
\begin{algorithmic}[1]
\For{$k = \ell,\ \ell-1,\ \ell-2,\ \ldots,\ 1,\ 0$}
\State \texttt{RefineABA}(k)
\EndFor
\end{algorithmic}
\end{algorithm}

The following \cref{lem:refine-types} will be useful to bound the number of $\texttt{Refine}$ calls needed in our final algorithm, and closely corresponds to \cite[Corollary~43]{chakrabarty2019faster}. Our \texttt{Refine} implementation has the advantage that
it only counts the elements in the \textbf{even} layers,
of which there are at most $r$.

\begin{lemma}
\label{lem:refine-types}
Let $(B^{old}_1, A^{old}_1, \ldots)$
and $(B^{new}_1, A^{new}_1, \ldots)$ be the sets before and after \texttt{Refine} is run.
Then at least $|B^{new}_1| - |B^{new}_{\ell+1}|$ elements in even layers
have changed types.
\end{lemma}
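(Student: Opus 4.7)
The plan is to count, one even layer at a time, how many elements undergo a fresh $\to$ selected type change during \texttt{Refine}. For $0 \le k \le \ell$ let $s_k$ denote the common value $|A_k| = |B_{k+1}| = |A_{k+1}|$ right after the iteration \texttt{RefineABA}$(k)$ of the outer loop executes (with the imaginary sets $A_0 = A_{\ell+1} = \emptyset$ simply dropped from the equalities, per \cref{lem:aba-sizes}). First I would observe that once \texttt{RefineABA}$(k)$ finishes, the sets $B_{k+1}$ and $A_{k+1}$ are never modified again: any later call \texttt{RefineABA}$(j)$ with $j < k$ in the outer loop only touches $A_j, B_{j+1}, A_{j+1}$, whose indices do not reach $k+1$ when $j < k$. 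Hence $|B_{k+1}^{new}| = s_k$ for every $k$; in particular $|B_1^{new}| = s_0$ and $|B_{\ell+1}^{new}| = s_\ell$. Combined with condition (b') of the partial-augmenting-set invariant, preserved by \cref{lem:aba-invariants}, this forces $s_0 \ge s_1 \ge \cdots \ge s_\ell$.

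Next I would examine what happens to $A_{k+1}$ during a single call \texttt{RefineABA}$(k)$ for $0 \le k \le \ell-1$. Right before this call runs, $|A_{k+1}| = s_{k+1}$, since $A_{k+1}$ was last set by the preceding iteration \texttt{RefineABA}$(k+1)$. Inside \texttt{RefineABA}$(k)$, only the two invocations of \texttt{RefineBA}$(k+1)$ can modify $A_{k+1}$, and in both cases the only update is $A_{k+1} \gets A_{k+1} + (F_{2k+2} \setminus A)$, which only extends $A_{k+1}$ by fresh elements from $D_{2k+2}$ — it never removes any. So $A_{k+1}$ grows monotonically from $s_{k+1}$ up to $s_k$, acquiring exactly $s_k - s_{k+1}$ new elements, each of which contributes a fresh $\to$ selected type change in the even layer $D_{2k+2}$.

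Since the even layers $D_2, D_4, \ldots, D_{2\ell}$ are pairwise disjoint, summing over $k = 0, 1, \ldots, \ell-1$ gives a telescoping total of
\[
\sum_{k=0}^{\ell-1}(s_k - s_{k+1}) \;=\; s_0 - s_\ell \;=\; |B_1^{new}| - |B_{\ell+1}^{new}|
\]
distinct elements in even layers whose type has changed during \texttt{Refine}. The only slightly delicate step is the monotonicity claim for $A_{k+1}$ — one must carefully inspect each of the three primitive refine procedures and verify that no step within \texttt{RefineABA}$(k)$ ever removes an element from $A_{k+1}$. Once that is in hand, everything else is routine bookkeeping.
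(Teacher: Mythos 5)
Your proof is correct and uses the same underlying telescoping idea as the paper: tracking how the sizes of the selected even-layer sets $A_k$ evolve across iterations of the outer loop and observing that each size change forces a corresponding number of distinct type changes in the (pairwise disjoint) even layers. The bookkeeping is slightly cleaner than the paper's, since you count only fresh $\to$ selected changes in $A_{k+1}$ during \texttt{RefineABA}$(k)$ (using the observation that $A_{k+1}$ grows monotonically there), whereas the paper mixes selected $\to$ removed changes in $A_i$ with fresh $\to$ selected changes in $A_{i+1}$ per iteration, but the essential structure of the argument is the same.
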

\begin{proof}

Note that whenever $A_k$ changes, it is because some elements changed it types in $D_{2k}$.
In particular, if the size of $A_k$ increases (respectively decreases) by $z$, at least
$z$ elements will change types from fresh to selected 
(respectively from selected to removed) in $D_{2k}$.

After the first iteration $|A_{\ell}| = |B^{new}_{\ell+1}|$, so
at least $|A^{old}_{\ell}| - |B^{new}_{\ell+1}|$ elements in $D_{2\ell}$ changed types.
Similarly, after the iteration when $k = i$ (for $1\le i \le \ell-1$),
$|A_{i}| = |A_{i+1}|$, and hence at least $|A^{old}_{i}| - |A_{i}|$ elements in $D_{2i}$ changed types plus at least $|A_{i+1}| - |A^{old}_{i+1}|$  elements in $D_{2i+2}$ changed types.\footnote{$|A_{i+1}| \le |A^{old}_{i+1}|$ just before the
$\texttt{RefineABA}(i)$ call, since earlier iterations can only have decreased the size of $|A_{i+1}|$.}
Finally, after the last iteration  $|A_{1}| = |B^{new}_{1}|$,
and hence at least $|B^{new}_{1}| - |A^{old}_{1}|$ elements in $D_{2}$ changed types.

The above terms telescope, and we conclude that at least $|B^{new}_1| - |B^{new}_{\ell+1}|$ elements in the even layers changed its types when \texttt{Refine}
was run.
\end{proof}

\begin{lemma}
\label{lem:refine-queries}
\texttt{Refine} uses $O(n)$ independence queries.
\end{lemma}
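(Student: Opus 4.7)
The plan is a straightforward aggregation argument: apply the per-call query bound from \cref{lem:aba-queries} to each of the $\ell+1$ invocations of \texttt{RefineABA} inside \texttt{Refine}, and then observe that in the resulting sum every distance-layer $|D_j|$ is counted only a constant number of times.

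More concretely, I would first write
\[
\#\text{queries} \;=\; \sum_{k=0}^{\ell} O\bigl(|D_{2k}| + |D_{2k+1}| + |D_{2k+2}|\bigr),
\]
using \cref{lem:aba-queries} for the cost of \texttt{RefineABA}$(k)$. Then I would rearrange this as a sum over the layers $D_j$ ($0 \le j \le 2\ell+2$) and check the multiplicities: an odd-indexed layer $D_{2k+1}$ appears only in the iteration for index $k$, while an even-indexed layer $D_{2k}$ appears in at most two iterations (those for indices $k{-}1$ and $k$), and symmetrically $D_{2k+2}$ appears for indices $k$ and $k{+}1$. Hence each $|D_j|$ is counted at most three times, giving
\[
\#\text{queries} \;=\; O\!\left(\sum_{j=0}^{2\ell+2} |D_j|\right).
\]

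Finally, since the layers $D_1, \ldots, D_{2\ell+1}$ are disjoint subsets of $V$ (and $D_0, D_{2\ell+2}$ are the imaginary empty layers we introduced for convenience), their total cardinality is at most $n$, so the bound becomes $O(n)$. There is no real obstacle here; the only mild subtlety is making sure that the boundary iterations $k=0$ and $k=\ell$ are handled correctly, which is immediate because the ``imaginary'' layers contribute nothing and the inner calls to \texttt{RefineAB} and \texttt{RefineBA} from \cref{lemma:refine-queries} are already accounted for inside the $O(|D_{2k}|+|D_{2k+1}|+|D_{2k+2}|)$ bound of \cref{lem:aba-queries}.
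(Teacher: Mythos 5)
Your proof is correct and follows the same route as the paper, which simply says the bound follows directly from \cref{lem:aba-queries}; you have just spelled out the summation over $k$ and the fact that each layer $|D_j|$ is counted a constant number of times while the layers are disjoint subsets of $V$.
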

\begin{proof}
This follows directly by \cref{lem:aba-queries}.
\end{proof}

\subsection{Refining Along a Path} \label{sec:refinepath}
If we just run $\texttt{Refine}$ until we get a maximal augment set (i.e.\ until $|B_1| = |B_{\ell+1}|$) we need to potentially run $\texttt{Refine}$ as many as $\Theta(r)$ times, which needs too many independence queries.
\cref{lem:refine-types} tells us that \texttt{Refine} makes the most ``progress''
while $|B_1| - |B_{\ell+1}|$ is large: in fact, only $O(r/p)$ calls to \texttt{Refine}
is needed until $|B_1| - |B_{\ell+1}| \le p$.
The idea in \cite{chakrabarty2019faster} is thus to stop refining
when $|B_1| - |B_{\ell+1}|$ is small enough and fall back to finding augmenting paths one at a time (they prove that one needs to find at most $O(
(|B_1| - |B_{\ell+1}|) \ell)$ many). We use a similar idea in that we
swap to a different procedure when $|B_1| - |B_{\ell+1}|$ is small enough, the difference being that we still work with the partial augmenting set. This will let us
show that only $O(|B_1| - |B_{\ell+1}|)$ many ``paths'' need to be found, saving a factor $\ell \approx \frac{1}{\eps}$ compared to \cite{chakrabarty2019faster}.

This section thus describes the second improvement (as discussed in \cref{itm:approx-improvement-eps} in \cref{sec:overview}). Note that this improvement is independent of the first improvement (i.e.\ the three-layer refine).
We aim to prove the following lemma.

\begin{lemma}
\label{lem:refine-path}
There exists a procedure (\texttt{RefinePath}, \cref{alg:refinepath}), which uses
$O(n \log r)$ independence queries, preserves the invariants, and either:
\begin{enumerate}[i.]
\item Increases the size of $B_{\ell+1}$ by at least $1$.
\item Terminates with $(B_1, A_1, \ldots, B_{\ell+1})$ being a maximal augmenting set.
\end{enumerate}
\end{lemma}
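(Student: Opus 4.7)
The plan is to implement \texttt{RefinePath} as a layered BFS from $s$ in a restricted exchange graph that searches for a \emph{valid path} with respect to the current partial augmenting set $\Phi_\ell = (B_1, A_1, \ldots, B_{\ell+1})$. A valid path is defined to be of the form $(s, b_1, a_1, \ldots, a_\ell, b_{\ell+1}, t)$ with $b_k \in F_{2k-1}$ and $a_k \in F_{2k}$, satisfying the local exchange conditions $S + B_1 + b_1 \in \cI_1$; $S - (A_k+a_k) + (B_k+b_k) \in \cI_2$; $S - (A_k+a_k) + (B_{k+1}+b_{k+1}) \in \cI_1$; and $S + B_{\ell+1} + b_{\ell+1} \in \cI_2$. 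Augmenting $\Phi_\ell$ along such a path by setting $B_k \leftarrow B_k + b_k$ and $A_k \leftarrow A_k + a_k$ yields a new partial augmenting set with $|B_{\ell+1}|$ larger by one, and preservation of invariants (a-b), (c), and (d) is a routine verification using the local independence conditions that define a valid path.

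For the query bound, the BFS visits each element at most once, and the outgoing edges from a fresh vertex into the next layer can be discovered via the exchange-discovery routine of \cref{lem:find-exchange}, at a cost of $O(\log r)$ queries per discovered edge. Since elements are charged only upon first visit, the total query count comes out to $O(n \log r)$. If the BFS reaches $t$, reconstructing the path and performing the augmentation gives case~(i); otherwise we enter case~(ii).

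The main obstacle is case~(ii): showing that if the BFS fails to reach $t$, then $\Phi_\ell$ is already a maximal augmenting set. The argument splits in two. First, failure of the BFS forces $|B_1| = |A_1| = \cdots = |B_{\ell+1}|$---otherwise the matroid exchange property applied at the smallest layer immediately yields a fresh element that can seed a valid path (combined with already-selected elements on either side). Second, assuming equality, suppose for contradiction that $\Phi_\ell$ is not maximal. Then \cref{thm:max-inc-dist} provides an augmenting path of length at most $2(\ell+1)$ in $G(S \oplus \Phi_\ell)$; by the same telescoping argument used in \cref{lem:invariants-maximality}, the phase invariants (c) and (d) force such an augmenting path to avoid removed elements. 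One then lifts this augmenting path to a valid path in the original exchange graph relative to $S$: at each step, either the relevant element is already fresh, or the matroid exchange property together with the valid-path conditions permits swapping it for a fresh alternative. The resulting valid path contradicts the BFS's failure. This lifting---translating an augmenting path relative to $S \oplus \Phi_\ell$ back into a valid path relative to $S$ using the constraints of $\Phi_\ell$---is the delicate part of the proof and will require a careful step-by-step substitution along the path.
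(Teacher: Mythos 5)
Your framework (search for a valid path; augment if found, argue maximality if not) and your maximality argument for case (ii)---lifting an augmenting path of $G(S\oplus\Phi_\ell)$ back through the phase invariants---match the paper in spirit; the latter is exactly the content of \cref{lem:invariants-maximality}. But there is a genuine gap in the middle that prevents your construction from working. You define a valid path as a full $(s,b_1,a_1,\ldots,b_{\ell+1},t)$-path with every $b_k,a_k$ fresh, and you search for one by forward BFS from $s$. However, \texttt{RefinePath} is only called after at least one \texttt{Refine} pass, and \texttt{Refine} ends with a \texttt{RefineAB}(0) call that leaves $B_1$ a \emph{maximal} subset of $D_1\setminus R_1$ with $S+B_1\in\cI_1$. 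So there is \emph{no} fresh $b_1$ with $S+B_1+b_1\in\cI_1$: your BFS stalls at the first layer and would wrongly declare maximality even when $|B_1|>|B_{\ell+1}|$. The paper's valid paths are instead \emph{suffixes} $(b_i,a_i,\ldots,b_{\ell+1},t)$ that enter $\Phi_\ell$ at any layer $i$ where $|A_{i-1}|>|B_i|$, via the entry condition $S-A_{i-1}+B_i+b_i\in\cI_1$ in \cref{lem:valid-path}. You gesture at this with ``combined with already-selected elements on either side,'' but your stated definition does not allow it, and the fix is not a small patch.

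Relatedly, the scan direction matters: the paper scans \emph{backward} from $t$, so that whenever a fresh vertex is discovered that can enter $\Phi_\ell$, a valid suffix to $t$ is already in hand (stored as $\texttt{next}[\cdot]$) and can be committed immediately; a forward BFS discovers only a valid prefix and cannot yet tell whether it continues to $t$. Two finer points are also glossed over: the binary-search edge-discovery of \cref{lem:find-exchange} needs the sets it probes (e.g.\ $S-A_k+B_k$) to be independent and the layers to be locally flat, which the paper guarantees by interleaving \texttt{RefineAB} and \texttt{RefineBA} calls into the scan; and your claim that BFS failure forces $|B_1|=\cdots=|B_{\ell+1}|$ needs a real argument---in the paper this follows because the interleaved refine calls would otherwise themselves reveal a valid path, a fact tied to the backward-scan structure and not immediate from ``the matroid exchange property at the smallest layer.''
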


\texttt{RefinePath} attempts to find what we call a \emph{valid path}.
What we want is a sequence of elements which we can add to the partial augmenting set
without violating the invariants and the properties of the partial augmenting set.
It turns out (not very surprisingly) that such sequences of elements can be characterized by a notion of \emph{paths} in something which resembles the \emph{exchange graph with respect to our partial augmenting set}.
This is what motivates the definition of \emph{valid paths} below.

\begin{figure}[ht]
\begin{center}
\includegraphics[width=9.4cm]{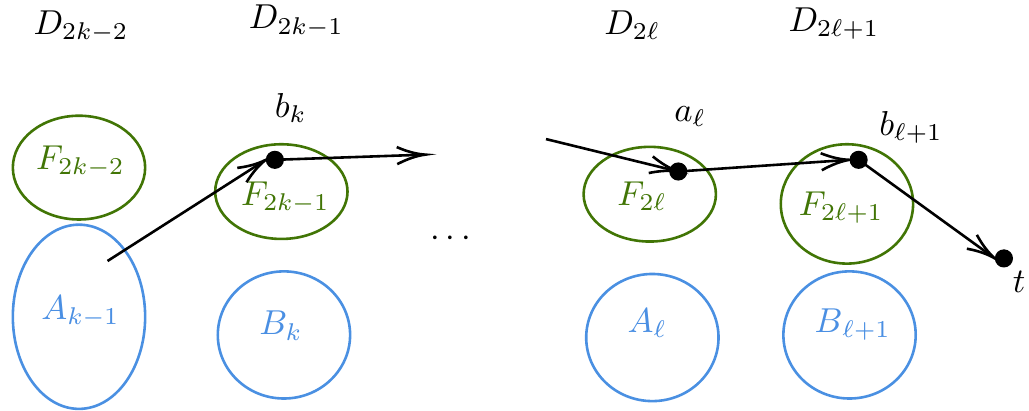}
\end{center}
\caption{
A valid path $(b_k, \ldots, a_{\ell}, b_{\ell+1}, t)$ ``starting'' from
the partial augmenting set at $A_{k-1}$, so that we can use \cref{lem:valid-path} and augment along it.
}
\end{figure}

\begin{definition}[Valid path]
A sequence $(b_i,a_i,b_{i+1}, \ldots, b_{\ell+1}, t)$
(or $(a_i, b_{i+1}, \ldots, b_{\ell+1}, t)$)
is called a \emph{valid path} (with respect to the partial augmenting set)
if for all $k \ge i$:
\begin{enumerate}[(a)]
\item $a_k\in F_{2k}$ and $b_k\in F_{2k-1}$.
\item $S + B_{\ell+1} + b_{\ell+1} \in \cI_2$.
\item $S-A_k + B_k - a_k + b_k \in \cI_2$.
\item $S-A_k + B_{k+1} - a_k + b_{k+1} \in \cI_1$.
\end{enumerate}
\end{definition}

\begin{remark}
Compare the properties of valid paths with the edges in the exchange graph from \cref{def:exchange}.
A valid path is essentially a path in the exchange graph
\emph{after} we have already augmented $S$ by our partial augmenting set (even though
this exchange graph is not exactly defined, since it is not guaranteed that $S$ remains a common independent set when augmented by a \emph{partial} augmenting set).
\end{remark}

\begin{lemma} \label{lem:valid-path}
If $p = (b_i,a_i,b_{i+1}, \ldots, b_{\ell+1},t)$ is a valid path starting at $b_i$, 
such that $S-A_{i-1} + B_{i} + b_i \in \cI_1$, then
$(B_1, A_1, \ldots, B_{i-1}, A_{i-1},\ B_i+b_i, A_i+a_i, \ldots, B_{\ell+1}+b_\ell)$
is a partial augmenting set satisfying the invariants.
\end{lemma}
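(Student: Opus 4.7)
The plan is to verify, one condition at a time, that the new tuple
\[
\Phi^{\text{new}} := (B_1, A_1, \ldots, B_{i-1}, A_{i-1},\ B_i + b_i,\ A_i + a_i,\ \ldots,\ A_\ell + a_\ell,\ B_{\ell+1} + b_{\ell+1})
\]
satisfies the six partial-augmenting-set conditions together with the phase invariants (c) and (d). The ``easy'' conditions (a), (c), (d), (e) are essentially read off from the valid-path definition and the lemma's hypothesis: (a) because $a_k \in F_{2k} \subseteq D_{2k}$ and $b_k \in F_{2k-1} \subseteq D_{2k-1}$ by valid-path (a); (c), which only constrains $B_1$, is either unchanged (when $i>1$) or is the hypothesis applied with the imaginary $A_0=\emptyset$ (when $i=1$); (d) is exactly valid-path (b); (e) is the hypothesis at $k=i-1$ and valid-path (d) for $k\ge i$.

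For the relaxed condition (b), each $|B_k|$ and $|A_k|$ with $k \ge i$ grows by exactly one, so the non-increasing chain is preserved on that range; the boundary inequality $|A_{i-1}| \ge |B_i + b_i|$ holds whenever the algorithm ensures the gap $|A_{i-1}| > |B_i|$ before invoking \texttt{RefinePath}. For the relaxed condition (f) at an index $k \ge i$, valid-path (c) yields $S - (A_k + a_k) + (B_k + b_k) \in \cI_2$, so the $\cI_2$-rank of this set equals its size $|S| + |B_k| - |A_k|$; combined with the equality $|A_k| = |B_k|$ (maintained by the refinement procedures and preserved by each valid-path augmentation since both sides grow by one), this rank equals $|S| = \rk_2(S)$.

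The phase invariants (c) and (d) form the main calculation and are both handled by submodularity. For invariant (c) at index $k$, replacing $A_k$ by $A_k + a_k$ turns $W := S - A_k + (D_{2k+1} - R_{2k+1})$ into $W - a_k$, while adding $b_{k+1}$ to $B_{k+1}$ leaves the set $W$ unchanged (since $B_{k+1} \cup F_{2k+1}$ does not change). Applying submodularity to $W - a_k$ and $W - R_{2k}$ (whose union is $W$ and intersection is $W - R_{2k} - a_k$) gives
\[
\rk_1(W - a_k) + \rk_1(W - R_{2k}) \;\ge\; \rk_1(W) + \rk_1(W - R_{2k} - a_k);
\]
plugging in the old invariant $\rk_1(W) - \rk_1(W - R_{2k}) = |R_{2k}|$ together with the trivial upper bound $\rk_1(W - a_k) - \rk_1(W - R_{2k} - a_k) \le |R_{2k}|$ forces equality, which is the new rank-form of invariant (c); the set-form translation is the equivalence cited in the paper's footnote. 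For invariant (d), the set $W := S - (D_{2k} - R_{2k}) + B_k$ changes only by the addition of $b_k$, since moving $a_k$ from $F_{2k}$ to $A_k$ leaves $A_k \cup F_{2k} = D_{2k} - R_{2k}$ unchanged; the old invariant asserts $R_{2k-1} \subseteq \textup{span}_{\cM_2}(W)$, and enlarging $W$ to $W + b_k$ can only enlarge this span.

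The main obstacle is the tacit bookkeeping of the size equality $|A_k| = |B_k|$: without it, condition (f) can be broken after the augmentation, so its preservation must be threaded through the invariants guaranteed by \texttt{RefineABA} and through the observation that valid-path augmentations grow both $|A_k|$ and $|B_k|$ by one. Modulo that subtlety, the proof is routine matroid bookkeeping relying only on downward closure, submodularity of the rank functions, and the four spelled-out valid-path conditions.
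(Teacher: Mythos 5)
Your verification of conditions (a), (c), (d), (e) and your treatment of the phase invariants via submodularity and span monotonicity are correct and in fact more detailed than the paper's one-line declaration that the invariants are ``trivially true since the sets are only extended.'' You also correctly notice that partial-augmenting-set condition (f) at index $k\ge i$ only holds if $|A_k|=|B_k|$ --- a subtlety the paper's proof glosses over entirely; valid-path (c) only gives independence of $S-(A_k+a_k)+(B_k+b_k)$, whose rank then equals its cardinality $|S|+|B_k|-|A_k|$, and this matches $\rk_2(S)$ only when $|A_k|=|B_k|$.

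There is, however, a genuine gap in your handling of condition (b). You assert that the boundary inequality $|A_{i-1}|\ge|B_i+b_i|$ ``holds whenever the algorithm ensures the gap $|A_{i-1}|>|B_i|$ before invoking \texttt{RefinePath}.'' This is not a proof: it reduces the needed inequality to an unstated algorithmic invariant, and it gets the dependence backwards. The paper derives the gap directly from the lemma's own hypothesis $S-A_{i-1}+B_i+b_i\in\cI_1$: if $|A_{i-1}|<|B_i+b_i|$ with $i>1$, then $S-A_{i-1}+B_i+b_i$ is an independent set in $\cM_1$ of size strictly greater than $|S|=\rk_1(S)$, so the matroid exchange property yields some $x\in B_i+b_i$ with $S+x\in\cI_1$, placing $x\in D_1$ --- contradicting $B_i+b_i\subseteq D_{2i-1}$ with $2i-1\ge 3$. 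That is, the hypothesis \emph{forces} the gap; the algorithm does not need to provide it separately. You should replace your appeal to context with this exchange-property argument. The same critique applies, less severely, to your resolution of the $|A_k|=|B_k|$ issue for condition (f): deferring to the refinement procedures and ``both sides grow by one'' is the right informal intuition but belongs as an explicit hypothesis of the lemma (or as a separately proved invariant), not as a parenthetical.
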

\begin{proof}
That it forms a partial augmenting set is true by the definition of valid paths,
and the fact that $S-A_{i-1}+B_i+b_i\in \cI_1$.
Indeed, it cannot be the case that $|A_{i-1}| < |B_i+b_i|$ when $i>1$, since then
$\rk_1(S-A_{i-1}+B_i+b_i) > |S| = \rk_1(S)$ implies that some element $x\in (B_i+b_i)$ satisfies $S+x \in \cI_1$ (i.e.\ it is in the first layer $D_1$) by the exchange property of matroids.
Invariants (c) and (d) are trivially true since the sets $A_k$ and $B_k$ are only extended.
\end{proof}

The goal of $\texttt{RefinePath}$ (\cref{alg:refinepath}) is thus to find a valid path satisfying the conditions in \cref{lem:valid-path}.
Towards this goal, \texttt{RefinePath} will start from the last layer $D_{2\ell+1}$ and
``scan left'' in a breadth-first-search manner while keeping track of valid paths starting at each fresh vertex $x$ (the next element on such a path will be stored as $\texttt{next}[x]$). If at some point one valid path can ``enter'' the partial augmenting set in a layer, we are done and can use \cref{lem:valid-path}.
We also show that it is safe (i.e.\ preserves the invariants) to remove all the fresh elements $x$ for which we cannot find a valid path starting at $x$.

To efficiently find the ``edges'' during our breadth-first-search using only independence-queries, we use the binary-search trick from \cref{lem:find-exchange}.
However, this relies on the partial augmenting set being locally ``flat'' in the layers we are currently exploring, i.e.
$|B_k| = |A_k|$ respectively $|B_k| = |A_{k+1}|$. We can ensure this by
running $\texttt{RefineAB}$ respectively $\texttt{RefineBA}$
while performing the scan.

Now we are ready to present the pseudo-code of the \texttt{RefinePath} method (\cref{alg:refinepath}). 
Due to the asymmetry between even/odd layers and independence queries, we need to handle moving from layer $B$ to $A$ and from $A$ to $B$ a bit differently, but the ideas are similar.

\begin{algorithm}[ht]
\caption{\texttt{RefinePath}} \label{alg:refinepath}
\begin{algorithmic}[1]

\For{$k = \ell+1,\ \ell,\ \ldots,\ 2,\ 1$}

      \Comment{Process $(B_{k}, A_{k})$}
    
\State $\texttt{RefineBA}(k)$
\If{some element $a$ was added to $A_k$ in the above refine-call}
\State Add the valid path starting at $\texttt{next}[a]$ to the partial augmenting set
\label{lst:valid-path-1}
\State \Return
\EndIf
\For{each element $b\in F_{2k-1}$}
\If{$S-A_{k}-F_{2k}+B_{k} + b \notin \cI_2$}
\label{lst:check-1}
\State Remove $b$, that is: $F_{2k-1} \gets F_{2k-1}-b, \quad R_{2k-1} \gets R_{2k-1}+b$
\label{lst:remove-b}
\Else
\State \label{lst:find-1} Find an $a\in F_{2k}$ such that $S-A_{k}+B_{k} + b-a \in \cI_2$.
Let $\texttt{next}[b] = a$.
\State (Or, if $k=\ell+1$, just let $\texttt{next[b]} = t$)
\EndIf
\EndFor

\Statex
      \Comment{Process $(A_{k-1}, B_{k})$}
    
\If{some element $b\in F_{2k-1}$ satisfies $S-A_{k-1} + B_{k} + b \in \cI_1$}
\label{lst:check-2}
\State Add the valid path starting at $b$ to the partial augmenting set.
\label{lst:valid-path-2}
\State \Return
\EndIf
\State $\texttt{RefineAB}(k-1)$
\State $Q \gets F_{2k-2}$.
\For{each element $b\in F_{2k-1}$}
\While{can find $a\in Q$ such that $S-A_{k-1}+B_{k} + b-a \in \cI_1$}
\label{lst:find-2}
\State $Q \gets Q-a$. Let $\texttt{next}[a] = b$.
\EndWhile
\EndFor
\State Remove all elements in $Q$, that is: $F_{2k-2} \gets F_{2k-2}-Q,\quad
R_{2k-2} \gets R_{2k-2}+Q$.
\label{lst:remove-a}

\EndFor

\Statex
\State If we reached here, $(B_1, A_1, \ldots, B_{\ell+1})$ is a maximal augmenting set.

\end{algorithmic}
\end{algorithm}

\begin{restatable}{lemma}{PATHinv} \label{lem:refine-path-inv}
\texttt{RefinePath} preserves the invariants.
\end{restatable}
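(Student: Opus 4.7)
My plan is to treat each state-modifying operation of \texttt{RefinePath} in turn and verify that the phase invariants are preserved. These operations fall into three categories: (i) the sub-calls \texttt{RefineBA}$(k)$ and \texttt{RefineAB}$(k-1)$, handled immediately by \cref{lemma:refine-invariants}; (ii) moving a fresh element into a removed set at lines~9 and~22; and (iii) appending a valid path to the partial augmenting set and terminating at lines~4 and~15, handled by \cref{lem:valid-path}.

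For line~9 only invariant~(d) is at stake. Set $W := S - (D_{2k}-R_{2k}) + B_k$. Step~1 of the preceding \texttt{RefineBA}$(k)$ places $W$ in $\cI_2$, and the for-loop at lines~7--12 does not modify $B_k$, $A_k$, or $F_{2k}$, so $W \in \cI_2$ still holds when line~9 fires. The line~8 trigger is then $W + b \notin \cI_2$, giving $\rk_2(W+b) = |W| = \rk_2(W)$; diminishing-returns submodularity applied to the nested pair $W \subseteq W + R_{2k-1}^{\mathrm{old}}$ yields $\rk_2(W + R_{2k-1}^{\mathrm{old}} + b) = \rk_2(W)$, which is the post-update form of~(d).

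For line~22 only invariant~(c) at index $k-1$ is at stake, in its rank form $\rk_1(Z - R_{2k-2}) = \rk_1(Z) - |R_{2k-2}|$ with $Z := S - A_{k-1} + B_k + F_{2k-1}$. Reaching line~22 means the line~13 check failed, so no $b \in F_{2k-1}$ extends $S - A_{k-1} + B_k$ in $\cM_1$; combined with $S - A_{k-1} + B_k \in \cI_1$ (condition~(e) of the partial augmenting set), this makes $S - A_{k-1} + B_k$ a basis of $Z$, and each $b \in F_{2k-1}$ has a unique fundamental circuit $C_b \subseteq (S - A_{k-1} + B_k) + b$. I would show $C_b$ is disjoint from $R_{2k-2}^{\mathrm{new}} := R_{2k-2}^{\mathrm{old}} + Q$ via two circuit arguments: the contrapositive of the \emph{old} invariant~(c) with $X = B_k + b$ gives $(S - A_{k-1} + B_k - R_{2k-2}^{\mathrm{old}}) + b \notin \cI_1$, whose unique fundamental circuit must coincide with $C_b$ (nested independent sets $I_1 \subseteq I_2$ share their fundamental circuit of $e$ whenever $I_1+e$ is dependent), so $C_b \cap R_{2k-2}^{\mathrm{old}} = \emptyset$; and the matching at lines~19--21 guarantees $(S - A_{k-1} + B_k) + b - a \notin \cI_1$ for every $a \in Q$ and every $b \in F_{2k-1}$, which would fail (removing an element of a fundamental circuit restores independence) if $a \in C_b$, so $C_b \cap Q = \emptyset$. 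Hence $C_b \subseteq (S - A_{k-1} + B_k - R_{2k-2}^{\mathrm{new}}) + b$ is dependent, every $b \in F_{2k-1}$ lies in the $\cM_1$-span of $(S - A_{k-1} + B_k) - R_{2k-2}^{\mathrm{new}}$, and $\rk_1(Z - R_{2k-2}^{\mathrm{new}}) = |S - A_{k-1} + B_k| - |R_{2k-2}^{\mathrm{new}}|$, which is the required invariant.

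For lines~4 and~15 I would invoke \cref{lem:valid-path}, whose hypotheses are (A) the traced sequence is a valid path with respect to the current partial augmenting set, and (B) the entry condition $S - A_{i-1} + B_i + b_i \in \cI_1$ holds. Part~(A) is a direct matching: the \texttt{next}-assignments at lines~10, 12, and~21 are guarded by precisely the valid-path conditions~(c), (b), and~(d), respectively. Part~(B) at line~15 is the check at line~14 itself; at line~4, $\texttt{next}[a] = b_{k+1}$ was set in line~21 of iter $k+1$ with $S - A_k^{\mathrm{old}} + B_{k+1} + b_{k+1} - a \in \cI_1$, and the intervening \texttt{RefineBA}$(k)$ only enlarges $A_k$ with a set containing $a$, so downward closure yields $S - A_k^{\mathrm{new}} + B_{k+1} + b_{k+1} \in \cI_1$. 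I expect the main obstacle to be~(A): showing that each condition stored at one iteration is still valid at consumption. The key monotonicity is that $B_{k'}$ is frozen once set (later \texttt{RefineAB}-extensions of $B_{k'}$ are blocked by that iteration's line~13 check, and no subsequent iter otherwise touches $B_{k'}$) while $A_{k'}$ can only \emph{grow} over later iterations; since every condition has the form $S - A_\bullet + B_\bullet + \cdots \in \cI$, each one survives via downward closure.
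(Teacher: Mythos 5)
Your decomposition matches the paper exactly: the sub-calls are handled by \cref{lemma:refine-invariants}, the path-insertions by \cref{lem:valid-path}, and the two removal lines are the only places where invariants~(d) and~(c) must be re-verified by hand. Your treatment of the removal at line~\ref{lst:remove-b} is essentially the paper's argument reworded (the paper phrases it as ``$W$ is a maximal independent subset of $W+R^{old}_{2k-1}+b$''; you phrase it as the diminishing-returns inequality applied to $W\subseteq W+R_{2k-1}^{old}$ together with the old invariant~(d), which is the same calculation).

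Where you genuinely diverge is the removal at line~\ref{lst:remove-a}. The paper removes the elements of $Q$ one at a time, shows via the matroid exchange property that each removed $a$ would raise the rank of $S-A_{k-1}+B_k+F_{2k-1}-a$ by exactly one, and then propagates the old invariant~(c) through a diminishing-returns step for each arbitrary $X$. You instead establish the rank form of~(c) in one shot via fundamental circuits: $S-A_{k-1}+B_k$ is a basis of $Z$ because line~\ref{lst:check-2} failed, each $b\in F_{2k-1}$ has a unique circuit $C_b$, the contrapositive of the old~(c) (with $X=B_k+b$) together with circuit uniqueness in nested independent sets forces $C_b\cap R_{2k-2}^{old}=\emptyset$, and the exhausted matching at line~\ref{lst:find-2} forces $C_b\cap Q=\emptyset$, so $S-A_{k-1}+B_k-R_{2k-2}^{new}$ still spans $F_{2k-1}$. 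This is a cleaner, more structural route that avoids the element-by-element induction and the per-$X$ submodularity chain. You also spell out explicitly, via the monotonicity of the $A_\bullet$ and $B_\bullet$ sets after their \texttt{next}-assignments, why the stored valid-path conditions remain valid at consumption time---a point the paper defers to an informal ``Valid paths'' discussion outside the proof body. Both approaches are correct; yours trades the paper's elementary rank manipulations for a circuit-theoretic argument, and in return gets a proof that establishes the invariant in its rank form directly.
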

\begin{proof}
The proof is relatively straightforward, but technical.
The only non-trivial part is showing that invariants (c) and (d) are preserved
after we remove something in line~\ref{lst:remove-b} or line~\ref{lst:remove-a}. Intuitively, if we remove $b$ in line~\ref{lst:remove-b}, we can instead think of temporarily adding $b$ to $B_{k}$ and running $\texttt{RefineBA}(k)$ in such a way so that $b$ is immediately removed. A similar intuitive argument works for line~\ref{lst:remove-a}. We next present a formal proof.

We know that \texttt{RefineAB} and \texttt{RefineBA} preserve the invariants, by \cref{lemma:refine-invariants}.
We also know by \cref{lem:valid-path} that adding a valid path to the partial augmenting set also preserves the invariants.
So what remains is to show that the invariants are preserved after:
\begin{description}

\item[Line~\ref{lst:remove-b}.]
We only need to check invariant (d), the other ones trivially hold.
Let $W = S-A_k-F_{2k}+B_k = S-(D_{2k}-R_{2k}) + B_{k}$
and $R^{old}_{2k-1}$ be $R_{2k-1}$ before $b$  was added to it.
Note that $b$ is such that $W + b \notin \cI_2$,
and we know that $W \subseteq S-A_k+B_k\in \cI_2$
and hence $\rk_2(W+R^{old}_{2k-1}) = \rk_2(W) = |W|$
and $\rk_2(W+b) = \rk_2(W) = |W|$.
We thus need to show that $\rk_2(W+R^{old}_{2k-1}+b) = |W|$ too,
which is clear since $W$ is a maximal independent subset of
$W+R^{old}_{2k-1}+b$ (it can neither be extended with elements from $R^{old}_{2k-1}$
nor with $b$).

\item[Line~\ref{lst:remove-a}.]
We only need to check invariant (c), the other ones trivially hold.
We imagine we add the $a\in Q$ to $R_{2k-2}$ one-by-one, and show that the invariant (c)
is preserved after each such addition.
So consider some $a\in Q$ which will be removed,
and let $R^{old}_{2k-2}$ be the set $R_{2k-2}$ just before we added $a$ to it.
First note that $\rk_1(S-A_{k-1}+B_k + F_{2k-1} - a) = \rk_1(S-A_{k-1}+B_k + F_{2k-1}) - 1
= |S-A_{k-1}+B_k| - 1$,
as otherwise there must exist some $b\in F_{2k-1}$ such that
$S-A_{k-1}+B_k + b- a\in \cI_1$ (by the matroid exchange property), and $a$ would have been discovered in line~\ref{lst:find-2} and therefore been removed from $Q$.
So the ``return'' of adding $a$ to $S-A_{k-1}+B_k + F_{2k-1} - a$ is increasing the rank by $1$.
Now consider some arbitrary $X\subseteq B_k + F_{2k-1}$ such that
$S-A_{k-1}+X-R^{old}_{2k-2}-a \in \cI_1$.
We need to show that $S-A_{k-1}+X \in \cI_1$.
Note that $S-A_{k-1}+X-R^{old}_{2k-2}-a\subseteq S-A_{k-1}+B_{k}+F_{2k-1}-a$.
Hence, by the diminishing returns (of adding $a$) we know
$\rk_1(S-A_{k-1}+X-R^{old}_{2k-2}) \ge \rk_1(S-A_{k-1}+X-R^{old}_{2k-2}-a) + 1
= |S-A_{k-1}+X-R^{old}_{2k-2}|$,
or equivalently that $S-A_{k-1}+X-R^{old}_{2k-2} \in \cI_1$.
Since the invariant held before, we conclude that $S-A_{k-1}+X \in \cI_1$ too, which finishes the proof. \qedhere{}
\end{description}
\end{proof}

\paragraph{Valid paths.}
The algorithm keeps track of a valid path starting at each fresh vertex
it has processed. That is, after processing layer $D_{k}$, all elements in $F_k$ must be the beginning of a valid path, else they were removed. In particular, the algorithm remembers the valid path starting at $x$ as
$(x, \texttt{next}[x], \texttt{next}[\texttt{next}[x]], \ldots)$.
It is easy to verify that this sequence does indeed satisfy the conditions of \emph{valid paths} by inspecting lines \ref{lst:find-1} and \ref{lst:find-2}.

We also discuss what happens when the algorithm chooses to add a valid path to the partial augmenting set (i.e.\ in line~\ref{lst:valid-path-1} or \ref{lst:valid-path-2}).
If we are in Line~\ref{lst:valid-path-2}, we can directly apply \cref{lem:valid-path}.
Say we instead are in Line~\ref{lst:valid-path-1}, and some $a$ which was previously fresh has been added to $A_k$. The \texttt{RefineBA} call can only have increased
$A_k$ (that is $A_k \supseteq  A^{old}_k + a)$,
so $S-A_k+B_{k+1} + b\in \cI_1$ will holds for $b = \texttt{next}[a]$
and we can apply \cref{lem:valid-path} here too.

\paragraph{When no path is found.}
In the case when no valid path to add to the partial augmenting set is found, \texttt{RefinePath} must terminate with $|B_1| = |A_1| = \cdots = |B_{\ell+1}|$.
This is because the $\texttt{RefineAB}$
and $\texttt{RefineBA}$ will never select any new elements. That is $\texttt{RefineBA}$ will not change $A_k$ (as otherwise we enter the if-statement at line~\ref{lst:valid-path-1}),
and $\texttt{RefineAB}$ will not change $B_{k}$ (since if $b\in F_{2k-1}$
with $S-A_{k-1}+B_k+b\in \cI_1$ existed we would have entered the if-statement at line~\ref{lst:valid-path-2}).
We also remark that $\texttt{RefinePath}$ ends with $B_1$ being a maximal subset
of $D_1\setminus R_1$, as otherwise some $b$ would have been found in line~\ref{lst:check-2}.
Hence \cref{lem:invariants-maximality} implies that $(B_1, A_1, \ldots, B_{\ell+1})$ now forms a \emph{maximal} augmenting set.

\paragraph{Query complexity.}
The \texttt{RefineAB} and \texttt{RefineBA} calls will in total use $O(n)$ queries.
The independence checks at Lines~\ref{lst:check-1} and \ref{lst:check-2} happens at most once for each element, and thus use $O(n)$ queries in total.
Lines~\ref{lst:find-1} and \ref{lst:find-2} can be implemented using the binary-search-exchange-discovery \cref{lem:find-exchange}.
Hence Line~\ref{lst:find-1} will use, in total, $O(n\log r)$ queries
and Line~\ref{lst:find-2} will use, in total, $O(n\log r)$ queries (since each $a\in Q$ will be discovered at most once).
So we conclude that \cref{alg:refinepath} uses $O(n\log r)$ independence queries.

\subsection{Hybrid Algorithm}

Now we are finally ready to present the full algorithm of a phase, which is parameterized by a variable $p$.
The following algorithm is similar to that of \cite[Algorithm~12]{chakrabarty2019faster} but uses our improved \texttt{Refine} method  and finds individual paths using the \texttt{RefinePath} method.

\begin{algorithm}[ht]
\caption{Phase $\ell$} \label{alg:approx-phase}
\begin{algorithmic}[1]

\State
Calculate the distance layers by a BFS.

\State
Run $\texttt{Refine}$ (\cref{alg:refine}) until $|B_{1}|-|B_{\ell+1}|\leq p$, but at least once.

\State
Run $\texttt{RefinePath}$ (\cref{alg:refinepath}) until $(B_{1},A_1, \ldots B_{\ell+1})$ is maximal. Augment along it.

\end{algorithmic}
\end{algorithm}

\begin{lemma}
\label{lem:approx-phase-queries}
Except for line~1, \Cref{alg:approx-phase} uses $O(nr/p + np\log r)$ queries.\footnote{Compare this to $O(n^2/p + np\ell \log r)$ in \cite{chakrabarty2019faster}. The improvement from $n^2/p$ to $nr/p$
comes from the use of the new three-layer \texttt{RefineABA} method, and the (independent) improvement from $np\ell\log r$ to $np\log r$ comes from the use of the new \texttt{RefinePath} method.}
\end{lemma}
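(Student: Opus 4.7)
The plan is to bound the query counts of lines~2 and~3 of \Cref{alg:approx-phase} separately. For line~2, each \texttt{Refine} call costs $O(n)$ queries by \cref{lem:refine-queries}, so it suffices to bound the number of calls. By \cref{lem:refine-types}, every call except possibly the last leaves $|B_1^{new}| - |B_{\ell+1}^{new}| > p$ after it (otherwise the while-loop would have terminated), forcing at least $p+1$ elements in \emph{even} distance layers to change types during that call. Since the even layers are subsets of $S$ and contain at most $r$ elements in total, and each element can change type at most twice along the chain \emph{fresh} $\to$ \emph{selected} $\to$ \emph{removed}, the total type changes summed over all calls is at most $2r$. Hence there are $O(r/p)$ \texttt{Refine} calls, contributing $O(nr/p)$ queries in line~2.

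For line~3, each \texttt{RefinePath} call costs $O(n\log r)$ queries by \cref{lem:refine-path}, so it suffices to bound the number of calls by $O(p)$. At the start of line~3, $|B_1| - |B_{\ell+1}| \le p$, and by \cref{lem:refine-path} every call either terminates the phase or grows $|B_{\ell+1}|$ by at least one. The two crucial monotonicity claims are: (i) $|B_{\ell+1}|$ is non-decreasing throughout line~3 -- it is only touched in iteration $k = \ell+1$ of \texttt{RefinePath}, where \texttt{RefineBA}$(\ell+1)$ does nothing (property (d) already gives $S + B_{\ell+1} \in \cI_2$, so no element is dropped) and all remaining steps only extend it; and (ii) $|B_1|$ is non-increasing throughout line~3. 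Claim (ii) needs more care: $B_1$ is only touched in iteration $k = 1$, and within that iteration, if \texttt{RefineBA}$(1)$ does not shrink $B_1$ then $B_1$ is still maximal in $D_1 - R_1$ with $S + B_1 \in \cI_1$ (as was established at the end of \texttt{Refine} by the trailing \texttt{RefineAB}$(0)$ inside \texttt{RefineABA}$(0)$), so the valid-path check in line~\ref{lst:check-2} fails and the subsequent \texttt{RefineAB}$(0)$ likewise finds no fresh element to add; otherwise \texttt{RefineBA}$(1)$ moves some $s \ge 1$ elements from $B_1$ to $R_1$, after which the valid-path check may regrow $B_1$ by at most $1$, for a net change of at most $-s + 1 \le 0$. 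Combining $|B_{\ell+1}|_{\text{end}} \le |B_1|_{\text{end}} \le |B_1|_{\text{start}}$ with the fact that each non-terminating call adds at least one to $|B_{\ell+1}|$, the number of \texttt{RefinePath} calls is at most $|B_1|_{\text{start}} - |B_{\ell+1}|_{\text{start}} + 1 \le p + 1$, so line~3 uses $O(np \log r)$ queries.

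Summing the two contributions gives the claimed $O(nr/p + np\log r)$ bound. The main technical difficulty lies in the monotonicity argument for $|B_1|$: one must enumerate every site inside \texttt{RefinePath} that could in principle extend $B_1$ (namely the valid-path check in line~\ref{lst:check-2} when $k = 1$, and the trailing \texttt{RefineAB}$(0)$ call), and argue that the maximality of $B_1$ in $D_1 - R_1$ enforced at the end of \texttt{Refine} is preserved by \texttt{RefineBA}$(1)$ and blocks any net growth of $B_1$. This monotonicity -- absent in \cite{chakrabarty2019faster}, whose ``augmenting paths'' could grow only the tail of the partial augmenting set without controlling the head -- is exactly what shaves the factor $\ell \approx 1/\eps$ from the query count.
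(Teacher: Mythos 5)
Your decomposition and both final bounds match the paper's proof: line 2 costs $O(nr/p)$ via \cref{lem:refine-types} and \cref{lem:refine-queries}, line 3 costs $O(np\log r)$ because each non-terminating \texttt{RefinePath} call increases $|B_{\ell+1}|$ by one (\cref{lem:refine-path}) and you need to cap the number of calls by $O(p)$. The line-2 analysis is essentially identical. However, your argument for the key monotonicity claim~(ii) --- that $|B_1|$ is non-increasing throughout line~3 --- has a genuine gap.

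Your case analysis for iteration $k=1$ assumes that, whenever \texttt{RefineBA}$(1)$ does not shrink $B_1$, $B_1$ must have been maximal in $D_1 - R_1$, ``as was established at the end of \texttt{Refine}.'' But that maximality is established only once, after line~2; you need it at the \emph{start of every} \texttt{RefinePath} call, which requires an inductive argument you do not give. And that induction can fail: if in one call \texttt{RefineBA}$(1)$ removes $s \ge 2$ elements from $B_1$ (moving them to $R_1$) while $\rk_1(S + D_1 - R_1)$ drops by fewer than $s$, and then the valid-path addition regrows $B_1$ by only one, the new $B_1$ is strictly smaller than $\rk_1(S + D_1 - R_1) - |S|$, i.e.\ no longer maximal. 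In the following call \texttt{RefineBA}$(1)$ may not shrink $B_1$ at all, and yet line~\ref{lst:check-2} can succeed and add one element, giving a net $+1$ change in $|B_1|$ for that call --- contradicting your non-increasing claim. (Note that \texttt{RefineBA}$(1)$ can indeed shrink $B_1$: iteration $k=2$ runs \texttt{RefineAB}$(1)$, which may shrink $A_1$, after which $S - (D_2 - R_2) + B_1$ need not be in $\cI_2$.)

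The paper sidesteps this entirely with a weaker but cleaner bound: by the phase invariants, $B_1$ is always a subset of $D_1 - R_1$ with $S + B_1 \in \cI_1$, so $|B_1| \le \rk_1(S + D_1 - R_1) - |S|$; since $R_1$ can only grow, this upper bound is non-increasing; and since the final \texttt{RefineAB}$(0)$ of line~2 made $B_1$ achieve the bound, $|B_1|$ can never exceed its value just after line~2 (all maximal independent extensions of $S$ inside a fixed ground set have the same size). That gives $|B_1|_{\text{end}} \le |B_1|_{\text{after line 2}}$ directly, which is all you actually use. You should replace your per-iteration ``net change $\le 0$'' argument with this global rank bound; the rest of your proof then goes through.
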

\begin{proof}
\Cref{lem:refine-types} tells us that \texttt{Refine} changes types of at least $p$
elements in even layers (i.e.\ elements in $S$) every time it is run, except maybe the last time.
Thus we only run \texttt{Refine} $O(|S|/p + 1)$ times.
Each call takes $O(n)$ queries (\cref{lem:refine-queries}),
for a total of $O(nr/p)$ queries in line 2 of the algorithm.

Now we argue that $B_1$ can never become larger than what it was just after line
2 was run. This is because $\texttt{Refine}$ will run at least once, and ends with a $\texttt{RefineABA}(0)$ call which in turn ends with a $\texttt{RefineAB(0)}$ call---which extends $B_1$ to be a maximal set in $D_1\setminus R_1$ for which $S+B_1\subseteq \cI_1$ holds.\footnote{Indeed, since $\cM_1$ is a matroid, all such maximal sets have the same size, so we can never obtain something larger later.}

\Cref{lem:refine-path} tells us that each (except the last) time $\texttt{RefinePath}$ is run, $B_{\ell+1}$ increases by~$1$. This can happen at most $p$ times,
so line 3 uses a total of $O(np\log r)$ queries.
\end{proof}

Now it is easy to prove \cref{thm:approx}, which we restate below.

\ThmApprox*

\begin{proof}
Pick $p = \sqrt{r/\log r}$.\footnote{Compare this to $p = \sqrt{n\eps / \log r}$ in \cite{chakrabarty2019faster}.}
Then each phase will use $O(n\sqrt{r\log r})$ independence queries (by \cref{lem:approx-phase-queries}),
plus a total of $O(\frac{1}{\eps}n\log r)$ to run the BFS's across all
phases (see \cite{chakrabarty2019faster} for details on the BFS implementation). Since we need only run $O(\frac{1}{\eps})$ phases (by \cref{clm:dists} and
\cref{thm:max-inc-dist}), 
in total the algorithm will use $O(\frac{1}{\eps}n\sqrt{r\log r})$ queries.
\end{proof}

\section{Exact Matroid Intersection} \label{sec:exact}
In this section, we prove \cref{thm:exact} (restated below) by showing how our improved approximation algorithm leads to an improved exact algorithm when combined with the algorithms of \cite{quadratic2021}.

\ThmExact*

Approximation algorithms are great at finding the \emph{many, very short} augmenting
paths efficiently.
Blikstad-v.d.Brand-Mukhopadhyay-Nanongkai
\cite[Algorithm~2]{quadratic2021} very recently showed how to 
efficiently find the remaining \emph{few, very long} augmenting paths, with a randomized
algorithm using $\tO(n\sqrt{r})$ queries per augmentation
(or, with a slightly less efficient deterministic algorithm using $\tO(nr^{2/3})$ queries).
In the randomized $\tO(n^{6/5}r^{3/5})$-query exact algorithm of \cite[Algorithm~3]{quadratic2021}, the current bottleneck is the approximation algorithm used.
Replacing the use of the $\tO(n^{1.5}/\eps^{1.5})$-query approximation algorithm
from \cite{chakrabarty2019faster} with our improved version we obtain the more efficient randomized\footnote{The deterministic algorithm of \cref{thm:exact} is obtained in the same fashion but by using the deterministic version of the augmenting path finding algorithm \cite[Algorithm~2]{quadratic2021}.} $\tO(nr^{3/4})$-query \cref{alg:exact}.

\begin{algorithm}[ht]
\caption{Exact Matroid Intersection
\hfill(Modified version of \cite[Algorithm~3]{quadratic2021})}
\label{alg:exact}
\begin{algorithmic}[1]
\State
       Run the approximation algorithm (\cref{thm:approx})
       with $\eps = r^{-1/4}$
       to obtain a common independent set $S$ of size at least
       $(1-\eps)r = r - r^{3/4}$.
       \label{lst:line:approx}
\State Starting with $S$, run Cunningham's algorithm (as implemented by
       \cite{chakrabarty2019faster}), until the distance between $s$ and $t$
       becomes larger than $r^{3/4}$.
       \label{lst:line:cunningham}
\State Keep finding augmenting paths---one at a time---to augment along, using the randomized $O(n\sqrt{r}\log n)$-query algorithm of \cite[Algorithm~2]{quadratic2021}.
When no $(s,t)$-path can be found in the exchange graph, $S$ is a largest common independent set.
       \label{lst:line:augmenting-paths}
\end{algorithmic}
\end{algorithm}

\paragraph{Query complexity.}
We analyse the individual lines of \cref{alg:exact}.
\begin{description}
\item[Line \ref{lst:line:approx}.]
We see that the approximation algorithm uses $O(nr^{3/4}\log n)$ queries in line \ref{lst:line:approx}.

\item[Line \ref{lst:line:cunningham}.]
One need to
(i) compute distances up to $d = r^{3/4}$, and
(ii) perform at most $O(r^{3/4})$ augmentations.
\cite{chakrabarty2019faster, quadratic2021, nguyen2019note} show how to do (i) in
$O(nd\log n) = O(nr^{3/4}\log n)$ queries in total over all phases of Cunningham's algorithm, and how to do (ii)
using $O(n\log n)$ queries per augmentation (for a total of
$O(nr^{3/4}\log n)$ queries).

\item[Line \ref{lst:line:augmenting-paths}.]
By \cref{clm:dists}, line \ref{lst:line:augmenting-paths} runs
$O(r^{1/4})$ times---each using $O(n\sqrt{r}\log n)$ queries---for a total of $O(nr^{3/4}\log n)$ queries.
\end{description}

\begin{remark}
In \cref{alg:exact}, the bottleneck between line 1-2 and line 2-3 now matches
(which was not the case in \cite{quadratic2021}). This means that if one wants to improve
the algorithm by replacing the subroutines in line 1 and 3, one need to \textbf{both} improve the approximation algorithm (line 1) and the method to find a single augmenting-path (line 3).
\end{remark}

\section*{Acknowledgement}
This project has received funding from the European Research Council (ERC) under the European Unions Horizon 2020 research and innovation programme under grant agreement No 71567.

I also want to thank Danupon Nanongkai and Sagnik Mukhopadhyay for insightful discussions and their valuable comments throughout the development of this work.

\bibliography{biblio}

\end{document}